\providecommand{\U}[1]{\protect\rule{.1in}{.1in}}
\newtheorem{theorem}{Theorem}
\newtheorem{definition}[theorem]{Definition}
\newtheorem{example}[theorem]{Example}
\newtheorem{proposition}[theorem]{Proposition}
\newtheorem{remark}[theorem]{Remark}
\newenvironment{proof}[1][Proof]{\noindent\textbf{#1.} }{\ \rule{0.5em}{0.5em}}
\begin{document}

\title{A N\textsc{ew} A\textsc{pproach} \textsc{of} P\textsc{oint}
E\textsc{stimation} \textsc{from} T\textsc{runcated or Grouped and Censored}
D\textsc{ata}}
\author{A\textsc{hmed} G\textsc{uellil}$^{1}$ \textsc{and} T\textsc{ewfik}
K\textsc{ernane}$^{2}$\\$^{1}$\textit{Department of Probability and Statistics, Faculty of Mathematics
}\\\textit{\ University of Sciences and Technology USTHB,}\\\textit{BP 32 El-Alia, Algeria}\\$^{2}$\textit{Department of Mathematics, Faculty of Science}\\\textit{King Khalid University}\\\textit{Abha, Kingdom of Saudi Arabia}\\e-mail: guellilamed@yahoo.fr, \ tkernane@gmail.com}
\date{}
\maketitle

\begin{abstract}
We propose a new approach for estimating the parameters of a probability
distribution. It consists on combining two new methods of estimation. The
first is based on the definition of a new distance measuring the difference
between variations of two distributions on a finite number of points from
their support and on using this measure for estimation purposes by the method
of minimum distance. For the second method, given an empirical discrete
distribution, we build up an auxiliary discrete theoretical distribution
having the same support of the first and depending on the same parameters of
the parent distribution of the data from which the empirical distribution
emanated. We estimate then the parameters from the empirical distribution by
the usual statistical methods. In practice, we propose to compute the two
estimations, the second based on maximum likelihood principle of known
theoretical properties, and the first being as a control of the effectiveness
of the obtained estimation, and for which we prove the convergence in
probability, so we have also a criterion on the quality of the information
contained in the observations. We apply the approach to truncated or grouped
and censored data situations to give the flavour on the effectiveness of the
approach. We give also some interesting perspectives of the approach including
model selection from truncated data, estimation of the initial trial value in
the celebrate \textit{EM} algorithm in the case of truncation and merged
normal populations, a test of goodness of fit based on the new distance,
quality of estimations and data.

\end{abstract}

{\small Key words and phrases: EM algorithm, Minimum distance, Model selection
from truncated data, Point estimation, Truncated data, Grouped and censored
data.}

\section{Introduction}

Point estimation is the most popular forms of statistical inference (see
Lehmann and Casella \cite{Leh}). We introduce in this paper a new statistical
point estimation approach which found be useful in special practical
situations such as truncated and grouped and censored data. The data are said
to be truncated when measuring devices fail to report observations below
and/or above certain readings. For example, truncated data frequently arise in
the statistical analysis of astronomical observations ( see Efron and
Petrosian \cite{Efr}) and in medical data (see Klein and Zhang \cite{Kle}),
and if the truncation is ignored this can cause considerable bias in the
estimation. There exists in the literature many approaches of estimation from
"incomplete data" such as maximum likelihood based approach of the EM
algorithm (Hartley \cite{Hart}, Dempster et \textit{al }\cite{Demp}), or
nonparametric methods such as Kaplan-Meier (Kaplan and Meier \cite{Kap}) or
Lynden-Bell estimators (Lynden-Bell \cite{Lyn}). The purpose of the present
paper is to investigate another approach which consists on combining two new
methods of estimation and to apply it in the fixed type I censored or grouped
and censored data situations.

In the first method, we remark that in estimation problems we deal in general
with three functions: a theoretical probability law $f(\cdot,\theta)$ of a
random variable $X,$ depending on a parameter $\theta$ (real or vector
valued), an empirical distribution $\widehat{f}$ constructed from a sample of
observations drawn from the random variable $X,$ and an estimation
$\widetilde{f}$ (from an estimation $\widetilde{\theta}$ of $\theta$) obtained
through the empirical law $\widehat{f.}$ The empirical distribution
$\widehat{f}^{\text{ }}$is considered as a representative distribution of $f,$
but in practice it is reduced to only few of its characteristics such as the
mean and variance. The variational aspect of $\widehat{f}$ is often neglected
while its importance. We can easily find, for instance, two distributions
having the same support, mean and variance while their variations differ
significantly, or conversely having the same variations but their supports and
characteristic parameters are different. But two probability distributions
with same support and same variations in each subset of the support are
necessarily the same. We introduce then a new distance which measures the
difference between variations of two distributions on a finite number of
points and to use it for estimation purposes by the method of minimum
distance. Since the new measure is not equivalent to classical ones it will
give new insights that could not be investigated by classical distances.

In the second method, we remark that the empirical distribution arising from a
sample of observations can be viewed in fact as a conditional distribution as
it is built from the knowledge of the data. It will be then an estimation of
the theoretical conditional distribution with respect to the observations
before being an estimation for the parent distribution. This theoretical
conditional distribution is represented by the auxiliary distribution
introduced in this paper. To determine this distribution in discrete case, we
have simply to take the conditional distribution with respect to the observed
values and we proceed analogously for the continuous case. It should be noted
that in discrete case it is known as the truncated distribution which is the
conditional distribution given a truncation (see for example Shaw \cite{Sha})
but it is presented here in a general framework. We have to deal with two
discrete probability distributions having the same finite support, a
theoretical distribution and its empirical representation with respect to the
observations. The parameters of the former are those of the parent
distribution and the aim is to estimate them from the first instead of the
parent one as commonly used. We use classical tools such as the method of
moments or maximum likelihood principle. The setting that seems to us most
suitable for illustrating our approach is the one of truncated or grouped and
censored data. In usual practical problems, truncation can be on left or right
or in either situations, and the "cut off" can be deterministic or random. In
our approach, the truncation may be on any part of the range of the
distribution so that the setting is more general. Also, classical approaches
for truncated data are in general custom-made depending on specific problems
and distributions, or subjective based methods. Instead, our approach is quite
general and might be used in any situation where the underlying complete data
come from a known family of distributions. We confine ourselves as a first
presentation to fixed type I and grouped and censored data.

In the subsequent section, we propose a variational distance between
probability distributions. In Section 3, we define a truncation of data and
associated empirical and theoretical distributions and we use two different
methods for estimation from truncation, a first method using minimum of the
new distance introduced in this paper and a second method based on traditional
tools of estimations such as the method of maximum likelihood. In Section 4,
we present the new approach and we illustrate the procedure by three examples:
a binomial probability law, a normal distribution and a Gamma density
function. We present also a basic feature of the new approach which prove the
accuracy of the method and some illustrative examples. In Section 5, we give
some elements of comparison with the classical approach of estimation. In
Section 6, we list some perspectives of the new approach: model selection from
truncated data using the new distance, estimation of the first trial value in
the celebrate \textit{EM} algorithm for incomplete data in the case of
truncation and merged normal distributions, a goodness of fit test based on
the new distance, decision making about the quality of estimations and data.
Finally, concluding remarks are made some pointing to other possible
extensions and applications.

\section{A New Distance Between Probability Distributions}

As is usual, given a sample of $n$ independent and identically distributed
observations, $\left(  x_{1},...,x_{n}\right)  ,$ drawn from an unknown
discrete random variable $X$ falling in a discrete family of probability laws
$\mathcal{P=}\left\{  f(\cdot,\theta),\theta\in\mathbb{R}^{r}\right\}  $
depending on a parameter $\theta$ (real or vector valued), i.e.,
$f(x,\theta)=P\mathbb{(}X=x),$ one can summarize the sample into $k$ couples
$(y_{1},\widehat{f}_{1}),...,(y_{k},\widehat{f}_{k}),$ $k\leq n,$ where the
$y_{i}$ are the different values taken by the sample and $\widehat{f}$ is the
empirical law $\widehat{f}_{j}=n_{j}/n,$ where $n_{j}$ represents the absolute
frequency of the value $y_{j},$ $j=1,...,k.$

Usually, it is hoped that $\widehat{f}_{j}\approx f(y_{j},\theta),$ in a
certain probabilistic sense. But if the empirical distribution arises from
truncated data, we do not hope in general having $\widehat{f}(x)\approx
f(x,\theta),$ for the values $x$ in the support of $\widehat{f},$ since the
complete sample size $n$ is usually not reported. However, we expect
reasonably to have approximately%
\begin{equation}
\frac{\widehat{f}(x)}{\widehat{f}(y)}\approx\frac{f(x,\theta)}{f(y,\theta)},
\end{equation}
for any points in its support, only if the sample has serious irregularities.

Introduce the following \textit{distance} \textit{of} \textit{proportional
variations} between $f(\cdot,\theta)$ and $\widehat{f}$%
\begin{equation}
d_{v}(\widehat{f},f(\cdot,\theta))=\sum_{i,j\in\left\{  1,...,k\right\}
}\left\vert \frac{\widehat{f}_{i}}{\widehat{f}_{j}}-\frac{f(y_{i},\theta
)}{f(y_{j},\theta)}\right\vert . \label{dv1}%
\end{equation}
It turns out that this new distance, as we will show, measures the variations
between probability distributions.

In continuous case also, any sample $x_{1},...,x_{n}$ is summarized into $k$
couples $(y_{1},\widehat{f}_{1}),...$ $,(y_{k},\widehat{f}_{k}),$ $k\leq n$.
This can be done uniquely, by grouping for example the sample in classes where
the $y_{i}$ are the mid-classes (or class means) and $\widehat{f}_{i}%
=\widehat{f}(y_{i})$ where $\widehat{f}$ is an empirical density estimator, or
the data is presented in a grouped and censored form. The proportional
variational distance $d_{v}$ in this case, between the density $f(x,\theta)$
of $X$ and its empirical law $\widehat{f},$ is thus defined as (\ref{dv1}).
One of its main powerful feature is that when using traditional distances we
have to use the sample size $n$ through the expression of $\widehat{f}%
_{i}=n_{i}/(nh_{n}),$ where $h_{n}$ is the size of class intervals; but
sometimes, as for truncated data situations where measuring devices fail to
report even the number of sample points in certain ranges, then the real size
$n$ is not known, but a truncated sample size $n_{t}$ is instead used. Using
the ratios $\widehat{f}_{i}/\widehat{f}_{j}$ will clear up the effect of the
truncated sample size which can lead to considerable bias in the estimation.

Note that $d_{v}$ possesses the properties of symmetry and triangle
inequality. But in the identity property $d_{v}(f,g)(x,y)=0\Longleftrightarrow
f\equiv g,$ the equality between $f$ and $g$ must be understood in the sense
that $f$ and $g$ have the same variations on the points $x$ and $y.$ It should
be stressed that this new measure is not equivalent to classical ones and
should then give new insights and information about other characteristics and
features of probability distributions.

From now on $f$ shall represent a theoretical probability law in both discrete
or continuous cases and $\widehat{f}$ shall represent the corresponding
empirical law in both cases. Denote by $\Omega=\left\{  x\in\mathbb{R}%
,f(x,\theta)>0\right\}  $ the set of \textit{atoms} of $f$ or \textit{support}%
. Let $\mathcal{F}$ be the $\sigma-$algebra generated by sets $A=B\cap\omega$
where the $\omega$ are the Borel sets of $\mathbb{R}$ and $B\subset\Omega.$
For all $A\in\mathcal{F},$ we have $P\left(  A\right)  =\int_{A}f(x,\theta
)\mu(dx),$where $\mu$ is the Lebesgue measure on $\mathbb{R}$. In discrete
case, we have $P\left(  A\right)  =\sum_{x\in A}f(x,\theta).$

For all $i\geq1,$ we set $\Omega_{i}=\Omega,$ $\mathcal{F}_{i}=\mathcal{F}$
and $P_{i}=P.$ Let $\Omega^{n}=\Omega_{1}\times...\times\Omega_{n},$
$\mathcal{F}^{(n)}=\mathcal{F}_{1}\otimes...\otimes\mathcal{F}_{n}$ and
$P^{(n)}=P_{1}\otimes...\otimes P_{n}.$ The probability space $\left(
\Omega^{n},\mathcal{F}^{(n)},P^{(n)}\right)  $ represents the space of samples
of size $n$ from the random variable $X.$ We omit the subscript $n$ in
$\left(  \Omega^{n},\mathcal{F}^{(n)},P^{(n)}\right)  $ for notational
convenience and shall denote the sample space as $\left(  \Omega
,\mathcal{F},P\right)  .$

\subsection{A Notion of Variation between probability distributions}

We will discuss now the measure theoretic aspect of the new distance
introduced above. Let $P$ and $Q$ two probability measures defined on the same
measurable space $\left(  \Omega,\mathcal{F}\right)  $, $f$ and $g$ their
respective probability densities, not necessarily with respect to the same
measure and $E$ an event of this space. We say that $f$ and $g$ have the same
variation on $E$, if the respective restrictions of $f$ and $g$ on $E$, define
the same probability measure on $E$ endowed with the sigma algebra traces of
$\mathcal{F}$ on $E$.

\begin{definition}
Let $f$ and $g$ two probability distributions positive and defined on a part
$E$ not reduced to only one element. If in any point $(x,y)$ of $E\times E$,
we have:
\begin{equation}
\frac{f(x)}{f(y)}=\frac{g(x)}{g(y)}%
\end{equation}
then we say that $f$ and $g$ have same variations on $E$.
\end{definition}

\begin{example}
Let $f$ be a density of a probability measure $P$ and $E$ an event such that
$P(E)>0$. The restriction of $f$ on $E$ and the conditional distribution of
$f$ with respect to $E$ define the same probability measure on $E$ and
consequently they have the same variations on $E.$
\end{example}

\begin{definition}
Let $f$ and $g$ two probability distributions and $E$ an event on which they
are strictly positive. If $E$ is discrete and not reduced to only one element,
and one of the distributions $f$ and $g$ being discrete and the other may not
be discrete, we call distance in variations between $f$ and $g$ on $E$ the
quantity:%
\[
d_{v}(f,g)_{E}=\sum_{\left(  x,y\right)  \in E}\left\vert \frac{f(x)}%
{f(y)}-\frac{g(x)}{g(y)}\right\vert .
\]
If $E$ is an interval of $\mathbb{R}$ and, $f$ and $g$ are probability
densities on $\mathbb{R}$, with respect to Lebesgue measure $\mu$ on
$\mathbb{R}$, we call distance in variations between $f$ and $g$ on $E$, the
quantity:%
\[
d_{v}(f,g)_{E}=\iint\limits_{E\times E}\left\vert \frac{f(x)}{f(y)}%
-\frac{g(x)}{g(y)}\right\vert \mu(dx)\mu(dy).
\]

\end{definition}

Let be given a classical distance $d$ between two functions $f$ and $g$ which
associates for points $x$ and $y$ from the intersection of their domain of
definitions, the quantity $d\left(  f,g\right)  \left(  x,y\right)
=\left\vert f(x)-g(x)\right\vert +\left\vert f(y)-g(y)\right\vert .$

\begin{proposition}
We have the following properties for the distance $d_{v}:$\newline\textbf{1.}
$d(f,g)(x,y)=0\Longrightarrow d_{v}(f,g)(x,y)=0,$ the converse is not always
true.\newline\textbf{2.} Let $\widehat{f}$ be a kernel density estimation.
Then $\lim_{n\rightarrow\infty}d_{v}(\widehat{f},f)=0$ in probability.\newline%
\textbf{3. }Let $f$ and $g$ be two functions defined on $\mathbb{R}$ and
$E\subset\mathbb{R}$ satisfying:%
\[
\forall\left(  x,y\right)  \in E\times E,\text{ }d_{v}(f,g)(x,y)=0.
\]
If%
\[
\int_{\mathbb{R}}f\text{ }d\mu=\int_{\mathbb{R}}g\text{ }d\mu=1,
\]
where $\mu$ is the Lebesgue measure on $\mathbb{R}$, then%
\[
\mu\left(  \overline{E}\right)  =0\implies f=g\text{ \ \ }\mu-\text{almost
surely on }\mathbb{R}\text{.}%
\]

\end{proposition}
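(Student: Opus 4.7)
The plan is to treat the three items one at a time since they are largely independent. For (1), I would observe that $d(f,g)(x,y)=|f(x)-g(x)|+|f(y)-g(y)|=0$ forces $f(x)=g(x)$ and $f(y)=g(y)$, so the ratios $f(x)/f(y)$ and $g(x)/g(y)$ trivially coincide and hence $d_v(f,g)(x,y)=0$. The converse fails because $d_v$ only detects variations up to a multiplicative constant: taking $g=cf$ on a two-point support with $c\neq 1$ makes every ratio agree while the classical distance $d$ is strictly positive.

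For (2), I would note that the distance in (\ref{dv1}) is a \emph{finite} sum over $(i,j)\in\{1,\dots,k\}^{2}$, so it suffices to prove that each summand tends to zero in probability. The plan is to invoke the classical pointwise consistency of the kernel density estimator: under standard bandwidth conditions ($h_n\to 0$, $nh_n\to\infty$) and continuity of $f$ at each evaluation point $y_i$, one has $\widehat{f}(y_i)\to f(y_i)$ in probability. Applying the continuous mapping theorem to $(u,v)\mapsto u/v$, which is continuous at $(f(y_i),f(y_j))$ provided $f(y_j)>0$, I obtain $\widehat{f}(y_i)/\widehat{f}(y_j)\to f(y_i)/f(y_j)$ in probability. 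The triangle inequality applied to the absolute-value summand together with the finiteness of the index set yields $d_v(\widehat{f},f)\to 0$ in probability.

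For (3), the hypothesis rewrites as $f(x)g(y)=g(x)f(y)$ for all $(x,y)\in E\times E$. Fixing any $y_0\in E$ with $f(y_0),g(y_0)>0$ and setting $c:=f(y_0)/g(y_0)$, one deduces $f(x)=c\,g(x)$ for every $x\in E$. Integrating against Lebesgue measure, and using $\mu(\overline{E})=0$ to reduce integrals over $\mathbb{R}$ to integrals over $E$, gives $1=\int_E f\,d\mu=c\int_E g\,d\mu=c$; hence $c=1$ and $f=g$ on $E$, which is $\mu$-almost-everywhere on $\mathbb{R}$.

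The main obstacle is (2). Parts (1) and (3) are essentially algebraic manipulations once the normalisation $\int f=\int g=1$ is used, but (2) tacitly requires a kernel and bandwidth sequence satisfying the usual consistency conditions, as well as positivity and continuity of $f$ at the evaluation points $y_j$. These hypotheses should be stated explicitly, since without $f(y_j)>0$ the ratios are undefined and the continuous mapping step breaks down; granted this regularity, the proof reduces to combining standard kernel-density consistency with the continuous mapping theorem on finitely many points.
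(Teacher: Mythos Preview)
Your proposal is correct and follows essentially the same route as the paper: parts (1) and (3) are argued identically (the paper also fixes $y_0\in E$, writes $f(x)=\frac{f(y_0)}{g(y_0)}g(x)$ on $E$, and integrates using $\mu(\overline{E})=0$ to force the constant to be $1$), and for (2) the paper likewise reduces to Parzen's pointwise consistency of $\widehat f$, though it phrases the passage from $d\to 0$ to $d_v\to 0$ more tersely than your explicit continuous-mapping argument. Your added remarks on the regularity hypotheses needed in (2) (positivity of $f$ at the $y_j$, standard bandwidth conditions) are well taken and make explicit what the paper leaves implicit.
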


\begin{proof}
\textbf{1.} Follows directly from the definitions of $d$ and $d_{v}.$%
\newline\textbf{2. }Follows from the fact $\lim_{n\rightarrow\infty}%
d(\widehat{f},f)=0$ in probability (see Parzen \cite{Par}), then
$\lim_{n\rightarrow\infty}d_{v}(\widehat{f},f)=0$ in the same probabilistic
notion of convergence.\newline\textbf{3.} Fix $y_{0}\in E,$ we have
$f(x)/f(y_{0})=g(x)/g(y_{0})$ for all $x\in E.$ This implies that%
\[
\int_{E}f(x)dx=1\iff\int_{E}f(y_{0})\frac{g(x)}{g(y_{0})}dx=\frac{f(y_{0}%
)}{g(y_{0})}\int_{E}g(x)dx=1.
\]
We deduce that $f(y_{0})=g(y_{0}),$ and the result follows.
\end{proof}

\section{Truncated Data}

The truncated data specification, or generally \textit{incomplete data,
}implies the existence of two sample spaces $\mathcal{X}_{o}$ and
$\mathcal{X}_{t}$, such that the complete sample space is given by
$\Omega=\mathcal{X}_{o}\cup\mathcal{X}_{t}.$ The observed data $\mathbf{x}%
_{o}=\left(  x_{1},...,x_{n_{t}}\right)  ,$\ where $n_{t}$ is the truncated
sample size, are a realization from $\mathcal{X}_{o}$ and the unobserved data
$\mathbf{z=}\left(  x_{1}^{\ast},...,x_{n-n_{t}}^{\ast}\right)  ,$ where $n$
is the complete unknown sample size, are from $\mathcal{X}_{t}.$ The complete
data $\mathbf{x}=\mathbf{x}_{o}\cup\mathbf{z}$ is known only through the
observed data $\mathbf{x}_{o}$ (see Dempster, Laird and Rubin \cite{Demp} for
further explanations about incomplete data specification)\textit{. }

Consider a sample of observations $x_{1},...,x_{n}$ drawn from a theoretical
probability law $f(\cdot,\theta),$ depending on a parameter $\theta
\in\mathbb{R}^{r}.$ As usual, the data are summarized, in discrete or
continuous cases (as shown in Section 2), into $k$ couples $(y_{1},\widehat
{f}_{1}),...,(y_{k},\widehat{f}_{k}),$ $k\leq n,$ and let $\triangle=\left\{
u_{1},...,u_{m}\right\}  $ a part from the set $\left\{  y_{1},...,y_{k}%
\right\}  ,$ $m\leq k,$ which we will call \textit{truncation. }The observed
data is summarized by a truncation $\triangle_{o}=\left\{  u_{1}%
,...,u_{m}\right\}  $ and an empirical estimation $\widehat{f}_{o}$ and assume
that the unobserved data is also summarized by a set $\triangle_{t}%
=\{u_{1}^{\ast},...,u_{p}^{\ast}\}$ and $\widehat{f}_{t}.$

The structure of the new distance $d_{v}$ allows the following decomposition
property:%
\begin{align}
d_{v}(\widehat{f},f(\cdot,\theta))  &  =d_{v}(\widehat{f}_{o},f(\cdot
,\theta))+d_{v}(\widehat{f}_{t},f(\cdot,\theta))+\label{dec}\\
&  \sum_{\substack{u_{i}\in\triangle_{o}\\u_{j}^{\ast}\in\triangle_{t}%
}}\left\vert \frac{\widehat{f}_{o}\left(  u_{i}\right)  }{\widehat{f}%
_{t}\left(  u_{j}^{\ast}\right)  }-\frac{f(u_{i},\theta)}{f(u_{j}^{\ast
},\theta)}\right\vert +\sum_{\substack{u_{i}\in\triangle_{o}\\u_{j}^{\ast}%
\in\triangle_{t}}}\left\vert \frac{\widehat{f}_{t}\left(  u_{j}^{\ast}\right)
}{\widehat{f}_{o}\left(  u_{i}\right)  }-\frac{f(u_{j}^{\ast},\theta)}%
{f(u_{i},\theta)}\right\vert .\nonumber
\end{align}
The following proposition is typical for the new distance and is useful for
using the minimum of distance $d_{v}.$

\begin{proposition}
Let be given a truncated data $\triangle_{o}$ with corresponding empirical
estimation $\widehat{f}_{o}.$ Then $\lim_{n_{t}\rightarrow\infty}%
d_{v}(\widehat{f}_{o},f)=0$ in probability.
\end{proposition}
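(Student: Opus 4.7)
The starting observation is that the $n_t$ observed data points, conditional on falling in the truncation set $\triangle_o$, form an i.i.d.\ sample from the \emph{conditional} law
\[
f^o(u_i) = \frac{f(u_i,\theta)}{\sum_{u \in \triangle_o} f(u,\theta)}, \quad u_i \in \triangle_o,
\]
and that the normalizing constant $P(X \in \triangle_o)$ cancels in every ratio: for all $u_i,u_j \in \triangle_o$,
\[
\frac{f(u_i,\theta)}{f(u_j,\theta)} = \frac{f^o(u_i)}{f^o(u_j)}.
\]
This cancellation is exactly what makes $d_v$ the natural object on truncated data: the unknown (and unobservable) truncation probability disappears.

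Given this, first I would apply the weak law of large numbers to the multinomial counts $n_i$ that produce $\widehat{f}_o(u_i) = n_i/n_t$, obtaining $\widehat{f}_o(u_i) \to f^o(u_i)$ in probability as $n_t \to \infty$, for each $i = 1,\dots,m$. Since $f^o(u_j) > 0$ for every $j$ (each $u_j$ lies in the support of $f$), the continuous mapping theorem applied to the division map $(a,b) \mapsto a/b$ in a neighbourhood of $(f^o(u_i),f^o(u_j))$ yields
\[
\frac{\widehat{f}_o(u_i)}{\widehat{f}_o(u_j)} \longrightarrow \frac{f^o(u_i)}{f^o(u_j)} = \frac{f(u_i,\theta)}{f(u_j,\theta)} \quad \text{in probability}.
\]
Each of the $m^2$ terms appearing in $d_v(\widehat{f}_o,f)$ therefore converges to $0$ in probability.

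To finish, I would use the elementary fact that a finite sum of sequences converging to $0$ in probability itself converges to $0$ in probability, applied to the sum over the finite index set $\triangle_o \times \triangle_o$. This gives $d_v(\widehat{f}_o,f) \to 0$ in probability, as claimed.

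I do not anticipate a serious technical obstacle; the only step requiring a moment of care is verifying that the denominators $\widehat{f}_o(u_j)$ are bounded away from $0$ with probability tending to $1$, which is immediate from $f^o(u_j) > 0$ together with $\widehat{f}_o(u_j) \to f^o(u_j)$ in probability. In the continuous/grouped case where $\triangle_o$ is an interval and $\widehat{f}_o$ is a kernel density estimator, the same argument applies after replacing the weak law of large numbers step by the kernel consistency result already invoked in item 2 of the previous proposition and integrating the pointwise convergence over the compact product $\triangle_o \times \triangle_o$.
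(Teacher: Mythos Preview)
Your argument is correct, but it proceeds along a genuinely different line from the paper's own proof. The paper does not argue directly on the conditional law $f^o$; instead it invokes the decomposition
\[
d_v(\widehat f,f)=d_v(\widehat f_o,f)+d_v(\widehat f_t,f)+\text{(non-negative cross terms)},
\]
together with the already-established convergence $d_v(\widehat f,f)\to 0$ in probability as $n\to\infty$ (Proposition~4, part~2, based on Parzen's kernel consistency). Since every summand is non-negative, $d_v(\widehat f_o,f)\to 0$ follows immediately; the passage from $n\to\infty$ to $n_t\to\infty$ is then justified by noting that $d_v(\widehat f_o,f)$ involves only ratios $n_i/n_j$ and hence depends on the observed counts alone.

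Your route is more self-contained: you bypass both the kernel-density result and the decomposition, working directly with the multinomial WLLN on the conditional sample and the continuous mapping theorem. This makes the proof independent of the previous proposition and arguably cleaner in the discrete case. The paper's approach, on the other hand, exploits the structural fact that $d_v$ splits additively over a partition of the support, which is the conceptual point the authors wish to highlight (and which also covers the continuous/kernel setting uniformly without the extra paragraph you needed at the end).
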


\begin{proof}
We have from Proposition 1 that $\lim_{n\rightarrow\infty}d_{v}(\widehat
{f},f)=0$ in probability. Then, from the decomposition property (\ref{dec}) we
obtain $\lim_{n\rightarrow\infty}d_{v}(\widehat{f}_{o},f)=\lim_{n_{t}%
\rightarrow\infty}d_{v}(\widehat{f}_{o},f)=0$ in probability.
\end{proof}

\subsection{An Auxiliary Distribution}

Define the empirical distribution $\widetilde{f}$ corresponding to a given
truncation $\triangle$ by:%
\[
\widetilde{f}(x)=\left\{
\begin{array}
[c]{c}%
\widetilde{f}_{i}\text{ \ \ \ \ if \ \ }x=u_{i},\text{ \ \ }i=1,...,m,\\
0\text{ \ \ \ \ \ \ \ \ \ \ \ \ \ \ \ otherwise, \ \ \ \ \ \ \ \ \ }%
\end{array}
\right.
\]
where the $\widetilde{f}_{i}$ satisfy the following set of proportional
allocation equations $\widetilde{f}_{i}/\widetilde{f}_{j}=\widehat{f}%
_{i}/\widehat{f}_{j},$ for $i,j=1,...,m$ and $\widetilde{f}_{1}+...+\widetilde
{f}_{m}=1.$

Define the following auxiliary distribution from $f(\cdot,\theta),$ which is
akin to the proportional allocation procedure for missing values (see Hartley
\cite{Hart}).%
\begin{equation}
h\left(  x,\theta\right)  =\left\{
\begin{array}
[c]{c}%
\dfrac{f(x,\theta)}{f(u_{1},\theta)+f(u_{2},\theta)+...+f(u_{m},\theta)}\text{
\ \ \ \ if \ \ }x=u_{i},\text{ \ \ }i=1,...,m,\\
0\text{ \ \ \ \ \ \ \ \ \ \ \ \ \ \ \ \ \ \ \ \ \ \ \ \ \ \ \ \ \ \ otherwise
\ }%
\end{array}
\right.  \label{auxil}%
\end{equation}

\begin{remark}
If the truncation is random, that is, there exists a random variable $T$ such
that we observe, for example, the random variable $X$ only if $X>T$ or $X<T,$
then the probability law used in (\ref{auxil}) is replaced by the conditional
law of $X$ with respect to $\left\{  X>T\right\}  $ or $\left\{  X<T\right\}
$ respectively.
\end{remark}

The auxiliary distribution $h$ was found be useful for estimation problems in
truncated data. Indeed, it is well known in classical estimation from
truncated data (see Hartley \cite{Hart}) that missing values could be
recovered by "proportional allocation" procedures, then the auxiliary
distribution $h,$ which is already based on proportional allocation, will be
an intuitive and natural tool for estimation purposes from truncated data. The
function $h$ is a theoretical probability distribution depending on the same
parameters of those of $f$. It has also the same \textit{support} as that of
$\widetilde{f}.$

\begin{definition}
We call $\widetilde{f}$ and $h(\cdot,\theta)$ the empirical and theoretical
distributions of a given truncation $\triangle=\left\{  u_{1},...,u_{m}%
\right\}  $ from a sample of observations $\left(  x_{1},...,x_{n}\right)  .$
\end{definition}

\section{The Approach of Estimation}

We will use mainly two methods of estimation. The first method is a minimum
distance estimation using the metric $d_{v}$ between the empirical and
theoretical distributions $\widehat{f}$ and $f(\cdot,\theta).$ The second is
similar to traditional ones such as the method of substitution or maximum
likelihood principle, by considering $\widetilde{f}$ as an empirical
estimation of $h(\cdot,\theta).$ The first is based on variational difference
between distributions and the second in the sense of an euclidean difference
and hence they treat different aspects of the sample of observations. If for a
given data they give different estimations, we cannot suspect the approaches
but we can say that the data do not restore in a coherent way all aspects of
the probability distribution from which it emanated. If on the other hand they
give significantly the same estimations we can assert that the estimation is
credible since through different aspects it has given the same distribution.
That is the distribution which fits the best the empirical distribution.
Practically, we propose to calculate the estimations by the two methods and
take the second one since based on maximum likelihood principle of good known
theoretical properties. We use then the first as a tool of decision on whether
the estimation is credible or not. The estimation will then be considered as
credible in cases where the two methods give approximately the same estimation.

\subsection{Convergence in Probability of the Minimum Distance Estimator}

Let $X_{1},X_{2},...,X_{n}$ a sample with $X_{i}\sim f(x,\theta),$
$\theta=\left(  \theta_{1},...,\theta_{s}\right)  ^{t}\in\Theta\subseteq
\mathbb{R}^{s},$ with%
\begin{equation}
f(x,\theta)=K(x)\times\exp\left\{  \sum_{k=1}^{s}\theta_{k}T_{k}%
(x)+A(\theta)\right\}  , \label{fam1}%
\end{equation}
$x\in\mathcal{X}\subseteq\mathbb{R},$ where $\mathcal{X}$ is a Borel set of
$\mathbb{R}$ such that $\mathcal{X=}\left\{  x:f(x,\theta)>0\right\}  $ for
all $\theta\in\Theta.$

The family (\ref{fam1}) is very rich, one finds there, for example, the family
of the normal laws, and the family of the laws of Poisson. We assume that the
support $\mathcal{X}$ does not depend on $\theta.$ Denote by $\widetilde
{\theta}_{n}$ the estimator by the minimum of metric $d_{v}$ between the
empirical and theoretical distributions $\widehat{f}_{n}$ (based on a sample
of size $n$) and $f(\cdot,\theta),$ that is%
\[
\widetilde{\theta}_{n}=\arg\min_{\theta}d_{v}(f(\cdot,\theta),\widehat{f}%
_{n}).
\]
This estimator falls into the class of M-estimators. Using well known theorems
on the convergence of M-estimators (see for example Amemiya \cite{Ame}) we
will prove that $\widetilde{\theta}_{n}$ converges in probability to the true parameter.

\begin{proposition}
Let $X_{1},X_{2},...,X_{n}$ be a sample from the family of distributions
(\ref{fam1}). If the set of natural parameters $\Theta$ is covex and the true
parameter $\theta$ is an interior point of $\Theta,$ then the estimator
$\widetilde{\theta}_{n}$ by the minimum of the distance of variations $d_{v}$
converges in probability to the true parameter $\theta,$ i.e.,%
\[
\widetilde{\theta}_{n}\overset{P}{\longrightarrow}\theta.
\]

\end{proposition}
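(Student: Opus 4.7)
The plan is to realize $\widetilde{\theta}_{n}$ as an extremum estimator and invoke a standard consistency theorem for M-estimators (e.g.\ Amemiya \cite{Ame}, Theorem~4.1.1). Introduce the sample and population criteria
\[
Q_{n}(\theta) := d_{v}\bigl(f(\cdot,\theta),\widehat{f}_{n}\bigr),\qquad
Q(\theta) := d_{v}\bigl(f(\cdot,\theta), f(\cdot,\theta_{0})\bigr),
\]
where $\theta_{0}$ denotes the true parameter (temporarily renaming the $\theta$ of the statement). Consistency will then follow from three ingredients: (a) identifiability, i.e.\ $Q$ is uniquely minimized at $\theta_{0}$; (b) the uniform convergence $\sup_{\theta}|Q_{n}(\theta)-Q(\theta)|\to 0$ in probability on a compact neighborhood $K$ of $\theta_{0}$; and (c) continuity of $Q$ on $K$, which is available because $\Theta$ is convex, $\theta_{0}$ is interior, and $f(x,\theta)$ is smooth in the natural parameters.

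Step (b) should come essentially for free from the triangle inequality for $d_{v}$ noted just after (\ref{dv1}): for every $\theta$,
\[
\bigl|Q_{n}(\theta)-Q(\theta)\bigr|\leq d_{v}\bigl(\widehat{f}_{n}, f(\cdot,\theta_{0})\bigr),
\]
and the right-hand side, which does not involve $\theta$, tends to $0$ in probability by Proposition~1 (part~2). Hence the convergence is automatically uniform in $\theta$ over the whole neighborhood $K$.

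For step (a), I would insert the exponential form (\ref{fam1}) into the identity $f(y_{i},\theta)/f(y_{j},\theta)=f(y_{i},\theta_{0})/f(y_{j},\theta_{0})$ and take logarithms; the factor $K(x)$ cancels between numerator and denominator and $A(\theta)$ cancels between the two sides, leaving the linear condition
\[
\sum_{k=1}^{s}(\theta_{k}-\theta_{0,k})\bigl(T_{k}(y_{i})-T_{k}(y_{j})\bigr)=0
\qquad\text{for all }y_{i},y_{j}\in\mathcal{X}.
\]
Under the standard full-rank hypothesis for a regular exponential family, namely affine independence of the sufficient statistics $T_{1},\dots,T_{s}$ on $\mathcal{X}$, this forces $\theta=\theta_{0}$. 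Combining (a)–(c) with Amemiya's theorem then delivers $\widetilde{\theta}_{n}\overset{P}{\longrightarrow}\theta_{0}$.

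The main obstacle will be step (a): the metric $d_{v}$ sees only ratios of density values, so identifiability on the chosen family is not automatic, and the exponential structure is what reduces it to a full-rank condition which is tacit in the hypothesis and should be spelled out. A secondary technical point to handle cleanly is the mismatch between the random support of $\widehat{f}_{n}$ and the fixed support $\mathcal{X}$ over which $Q$ is defined: one can either restrict $Q(\theta)$ to the sample support or absorb the discrepancy into the bound coming from Proposition~1. Neither causes real difficulty, but both deserve an explicit word.
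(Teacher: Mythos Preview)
Your argument is reasonable, but it follows a genuinely different route from the paper's. You set up the standard Amemiya triple (identifiability of the population criterion, uniform convergence of the sample criterion, continuity) and dispatch uniform convergence elegantly via the triangle inequality for $d_{v}$ together with Proposition~4. The paper does none of this: it never introduces a population criterion $Q(\theta)$ at all. Instead it invokes the \emph{convexity} version of Amemiya's consistency result, and the whole proof consists in showing that the sample criterion $\theta\mapsto d_{v}(f(\cdot,\theta),\widehat{f}_{n})$ is a convex function of $\theta$. The exponential-family form~(\ref{fam1}) is used, as in your step~(a), to write $f(y_{i},\theta)/f(y_{j},\theta)=C_{ij}\exp\bigl\{\sum_{k}\theta_{k}(T_{k}(y_{i})-T_{k}(y_{j}))\bigr\}$, but then, rather than deducing identifiability, the paper argues that each term $\delta_{ij}(\theta)=\bigl|C_{ij}\exp\{\cdot\}-A_{ij}\bigr|$ is convex in $\theta$ and sums.

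What each approach buys: the paper's convexity route sidesteps both the compact-neighborhood requirement and the support-mismatch issue you flag, since only the random sample criterion ever appears; it is short but the key step---passing from convexity of $C_{ij}\exp\{\cdot\}-A_{ij}$ to convexity of its absolute value---is delicate, because the absolute value of a convex function is not convex in general. Your route is longer but every ingredient is standard, and it makes the full-rank (affine independence of $T_{1},\dots,T_{s}$) hypothesis explicit, which the paper leaves tacit.
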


\begin{proof}
Since we search for a minimum of the criterion function $d_{v},$ it suffices
to show, under the assumptions of the family (\ref{fam1}) and the convexity of
the set $\Theta,$ that $d_{v}(\theta,\underline{x})$ seen as a function of
$\theta$ is a convex function (see Amemiya \cite{Ame}). Hence, this reduces
the problem to the convexity of%
\[
\delta_{ij}(\theta)=\left\vert \frac{f(y_{i},\theta)}{f(y_{j},\theta)}%
-\frac{\widehat{f}(y_{i})}{\widehat{f}(y_{j})}\right\vert .
\]
For $\lambda,\mu\in\mathbb{R}$ with $\lambda+\mu=1$, and $\theta^{(1)}%
,\theta^{(2)}\in\Theta,$ we have%
\begin{equation}
\delta_{ij}(\lambda\theta^{(1)}+\mu\theta^{(2)})=\left\vert C_{ij}\exp\left\{
\sum_{k=1}^{s}\left[  \lambda\theta_{k}^{(1)}+\mu\theta_{k}^{(2)}\right]
\left(  T_{k}(y_{i})-T_{k}(y_{j})\right)  \right\}  -A_{ij}\right\vert
\end{equation}
where $C_{ij}=K(y_{i})/K(y_{j})$ and assume that $C_{ij}>0$ and $A_{ij}%
=\widehat{f}(y_{i})/\widehat{f}(y_{j}).$\newline we have from the convexity of
the exponential function that%
\begin{align*}
\exp\left\{  \sum_{k=1}^{s}\left[  \lambda\theta_{k}^{(1)}+\mu\theta_{k}%
^{(2)}\right]  \left(  T_{k}(y_{i})-T_{k}(y_{j})\right)  \right\}   &
\leq\lambda\exp\left\{  \sum_{k=1}^{s}\theta_{k}^{(1)}\left(  T_{k}%
(y_{i})-T_{k}(y_{j})\right)  \right\} \\
&  +\mu\exp\left\{  \sum_{k=1}^{s}\theta_{k}^{(2)}\left(  T_{k}(y_{i}%
)-T_{k}(y_{j})\right)  \right\}  ,
\end{align*}
then%
\[
C_{ij}\exp\left\{  \sum_{k=1}^{s}\left[  \lambda\theta_{k}^{(1)}+\mu\theta
_{k}^{(2)}\right]  \left(  T_{k}(y_{i})-T_{k}(y_{j})\right)  \right\}
-A_{ij}\leq
\]%
\[
\lambda C_{ij}\exp\left\{  \sum_{k=1}^{s}\theta_{k}^{(1)}\left(  T_{k}%
(y_{i})-T_{k}(y_{j})\right)  \right\}  +\mu C_{ij}\exp\left\{  \sum_{k=1}%
^{s}\theta_{k}^{(2)}\left(  T_{k}(y_{i})-T_{k}(y_{j})\right)  \right\}
\]%
\[
-\left(  \lambda+\mu\right)  A_{ij}\leq\lambda\left[  C_{ij}\exp\left\{
\sum_{k=1}^{s}\theta_{k}^{(1)}\left(  T_{k}(y_{i})-T_{k}(y_{j})\right)
\right\}  -A_{ij}\right]  +
\]%
\[
\mu\left[  C_{ij}\exp\left\{  \sum_{k=1}^{s}\theta_{k}^{(2)}\left(
T_{k}(y_{i})-T_{k}(y_{j})\right)  \right\}  -A_{ij}\right]  .
\]
Introducing the absolute value we get%
\[
\delta_{ij}(\lambda\theta^{(1)}+\mu\theta^{(2)})=\left\vert C_{ij}\exp\left\{
\sum_{k=1}^{s}\left[  \lambda\theta_{k}^{(1)}+\mu\theta_{k}^{(2)}\right]
\left(  T_{k}(y_{i})-T_{k}(y_{j})\right)  \right\}  -\left(  \lambda
+\mu\right)  A_{ij}\right\vert
\]%
\[
\leq\lambda\left\vert C_{ij}\exp\left\{  \sum_{k=1}^{s}\theta_{k}^{(1)}\left(
T_{k}(y_{i})-T_{k}(y_{j})\right)  \right\}  -A_{ij}\right\vert
\]%
\[
+\mu\left\vert C_{ij}\exp\left\{  \sum_{k=1}^{s}\theta_{k}^{(2)}\left(
T_{k}(y_{i})-T_{k}(y_{j})\right)  \right\}  -A_{ij}\right\vert =\lambda
\delta_{ij}(\theta^{(1)})+\mu\delta_{ij}(\theta^{(2)}).
\]
Hence $\delta_{ij}(\theta)$ is a convex function of $\theta,$ which implies
the convexity of $d_{v}(\theta,\underline{x})$ seen as a function of $\theta$
and then the convergence in probability of the minimum of distance $d_{v}$ estimator.
\end{proof}

\subsection{A Maximum Likelihood Principle with the Auxiliary Distribution}

We firstly begin in a general situation, that of the one-parameter exponential
family, to show how to use the procedure explained below in the case of the
new method. Consider the one-parameter exponential family with density%
\begin{equation}
f(x,\theta)=K(x)\times\exp[\theta T(x)-A(\theta)],
\end{equation}
where $\theta$ is the parameter, $T$ a statistic, $K(x)$ a function of $x$ and
$A$ is a function of the parameter $\theta$. Let us use the maximum likelihood
principle. Consider a sample of observations $x_{1},...,x_{n}$ from which we
derive the support $\triangle=\left\{  y_{1},...,y_{k}\right\}  .$ We then
construct the auxiliary distribution from the support $\triangle$, expressed
in the following form%
\begin{equation}
h(x,\theta)=\frac{K(x)\times\exp[\theta T(x)-A(\theta)]}{\sum_{i=1}^{k}%
K(y_{i})\times\exp[\theta T(y_{i})-A(\theta)]}.
\end{equation}
We have to maximize the likelihood function given in our case by%
\begin{equation}
\mathbf{L}_{h}\left(  y,\theta\right)  =%
{\displaystyle\prod\limits_{i=1}^{k}}
h(y_{i},\theta).
\end{equation}
Without loss of generality, we assume that the class intervals are the same.
Then, we have%
\begin{equation}
\log\mathbf{L}_{h}\left(  y,\theta\right)  =%
{\displaystyle\sum\limits_{i=1}^{k}}
\log h(y_{i},\theta)=%
{\displaystyle\sum\limits_{i=1}^{k}}
\frac{n_{i}}{n}\log\left[  \frac{K(y_{i})\times\exp[\theta T(y_{i}%
)-A(\theta)]}{%
{\displaystyle\sum\limits_{i=1}^{k}}
K(y_{i})\times\exp[\theta T(y_{i})-A(\theta)]}\right]  ,
\end{equation}
taking the derivative and solving the score equation on $\theta$ we obtain an
estimator of the parameter $\theta$ satisfying the relation%
\begin{equation}%
{\displaystyle\sum\limits_{i=1}^{k}}
\frac{n_{i}}{n}\ \frac{%
{\displaystyle\sum\limits_{i=1}^{k}}
T(y_{i})\times f(y_{i},\theta)}{%
{\displaystyle\sum\limits_{i=1}^{k}}
f(y_{i},\theta)}=%
{\displaystyle\sum\limits_{i=1}^{k}}
\frac{n_{i}}{n}T(y_{i}). \label{famexpo2}%
\end{equation}
The later result may be obtained directly by the method of moments, but we
have presented the maximum likelihood method since it is widely used in
statistical inference.

In order to test the performance of the proposed approach, we use synthetic
data sets which were generated by simulation from three examples of
probability law: binomial law, normal density and a Gamma distribution. The
examples were selected from various simulation studies from different family
of probability distributions and the two methods have shown their
effectiveness and never deviate significantly from the true parameter. The
reason for using synthetic data sets is that the true parameters for the
synthetic datasets are known and the accuracy of results obtained by using the
two new methods can be compared.

\subsection{Examples}

\textbf{Binomial distribution. }We generated a synthetic data set of size
$500$ from a binomial law $\mathcal{B}(n,p)$ with $n=10$ and $p=0.3,$ and
denote by $f(y;p)=C_{n}^{y}p^{y}(1-p)^{n-y}$ its probability mass function.
The data are summarized in the following table.

\begin{center}
\textbf{Table 1. \smallskip}%

\begin{tabular}
[c]{ccccccccc}\hline
$y_{i}$ & $0$ & $1$ & $2$ & $3$ & $4$ & $5$ & $6$ & $7$\\\hline
$n_{i}$ & $15$ & $71$ & $108$ & $134$ & $97$ & $47$ & $23$ & $5$\\\hline
\end{tabular}

\end{center}

Our aim is to estimate the parameter $p,$ with the knowledge of $n=10$, from
different truncation of data.

For illustrating the two methods, consider the truncation $\triangle=\left\{
2,3,4,5\right\}  $ with truncated sample size $n_{t}=386.$ We have then a
truncation proportion of $Q=100(n-n_{t})/n=22,8$ $\%$ in data. For the first
method, we have to search the value of the parameter $p$ which minimizes the
distance $d_{v},$ that is:%
\[
\min_{p}d_{v}(\widehat{f},f)=\min_{p}\sum\limits_{\substack{i,j\in
\triangle\\i\neq j}}\left\vert \frac{f(y_{i};p)}{f(y_{j};p)}-\frac{n_{i}%
}{n_{j}}\right\vert ,
\]

Using computer algebra package, we obtain the result $\widetilde{p}%
_{1}=0.299.$

For the second method, the empirical distribution $\widetilde{f}$ given the
truncation $\Delta=\left\{  2,3,4,5\right\}  $ is given by $\widetilde
{f}(2)=108/386,$ $\widetilde{f}(3)=134/386$, $\widetilde{f}(4)=97/386,$
$\widetilde{f}(5)=47/386$ and $\widetilde{f}(x)=0$ if $x\notin\Delta.$

The auxiliary distribution $h(\cdot,p)$ is given by:%
\begin{equation}
h(x,p)=%
\genfrac{\{}{.}{0pt}{}{\dfrac{f(x,p)}{f(2,p)+f(3,p)+f(4,p)+f(5,p)}\text{
\ \ \ \ if }x=u_{i},\text{ \ \ }u_{i}\in\left\{  2,3,4,5\right\}  }{0\text{
\ \ \ \ \ \ \ \ \ \ \ \ \ \ \ \ \ \ \ \ \ \ \ \ otherwise.}}%
\end{equation}
By the method of substitution, the estimation of $p$ is obtained by solving
the equation:%
\begin{equation}
\sum_{u_{i}\in\left\{  2,3,4,5\right\}  }u_{i}\times h(u_{i},p)=\sum_{u_{i}%
\in\left\{  2,3,4,5\right\}  }u_{i}\times\widetilde{f}(u_{i})
\end{equation}
Using a computer algebra package we obtain the result $\widetilde{p}_{2}=0.3$.

In the following table we present the estimations $\widetilde{p}_{1}$ from the
first method using minimum distance approach using the distance $d_{v}$, and
$\widetilde{p}_{2}$ from the auxiliary distribution, of the parameter $p,$ for
known $n,$ according to the truncation $\triangle=\left\{  u_{1}%
,...,u_{m}\right\}  $ considered.\newpage

\begin{center}
\textbf{Table 2.} The estimations $\widetilde{p}_{1}$ and $\widetilde{p}_{2}$
by the new approach of the parameter $p$

of the binomial probability law $\mathcal{B}(n,p)$ with $p=0.3$ and known
$n=10$.\smallskip%

\begin{tabular}
[c]{cccccc}\hline
&  & Truncated & Proportion of &  & \\
$n%
{{}^\circ}%
$ & $\triangle$ & sample size $n_{t}$ & truncation $Q$ $(\%)$ & $\widetilde
{p}_{1}$ & $\widetilde{p}_{2}$\\\hline
$1$ & $\left\{  0,1,2,3,4,5,6,7\right\}  $ & $500$ & $0$ & $0.305$ & $0.298$\\
$2$ & $\left\{  0,1,2,3,4,5\right\}  $ & $472$ & $5.6$ & $0.295$ & $0.293$\\
$3$ & $\left\{  1,2,3,4,5\right\}  $ & $457$ & $8.6$ & $0.288$ & $0.292$\\
$4$ & $\{0,1,2,3,4\}$ & $425$ & $15$ & $0.295$ & $0.293$\\
$5$ & $\left\{  1,2,3,4\right\}  $ & $410$ & $18$ & $0.287$ & $0.292$\\
$6$ & $\left\{  0,2,3,4,5\right\}  $ & $401$ & $19.8$ & $0.295$ & $0.298$\\
$7$ & $\left\{  2,3,4,5\right\}  $ & $386$ & $22.8$ & $0.299$ & $0.3$\\
$8$ & $\left\{  0,1,3,4,5\right\}  $ & $364$ & $27.2$ & $0.295$ & $0.289$\\
$9$ & $\{0,2,3,4\}$ & $354$ & $29.2$ & $0.295$ & $0.301$\\
$10$ & $\{1,3,4,5\}$ & $349$ & $30.2$ & $0.287$ & $0.287$\\
$11$ & $\left\{  2,3,4\right\}  $ & $339$ & $32.2$ & $0.305$ & $0.305$\\
$12$ & $\{0,3,4,5\}$ & $293$ & $41.4$ & $0.295$ & $0.293$\\
$13$ & $\left\{  2,4,5,6,7\right\}  $ & $280$ & $44$ & $0.308$ & $0.307$\\
$14$ & $\{0,1,2,5,6,7\}$ & $269$ & $46.2$ & $0.298$ & $0.299$\\
$15$ & $\left\{  0,1,4,5,6,7\right\}  $ & $258$ & $48.4$ & $0.3013$ &
$0.295$\\
$16$ & $\left\{  0,4,5,6,7\right\}  $ & $187$ & $62.6$ & $0.3071$ & $0.302$\\
$17$ & $\left\{  0,5,6,7\right\}  $ & $90$ & $82$ & $0.3014$ & $0.301$\\
$18$ & $\left\{  0,5\right\}  $ & $62$ & $87.6$ & $0.2937$ & $0.294$\\\hline
\end{tabular}

\end{center}

As previously said, the two estimations by the new approach, $\widetilde
{p}_{1}$ and $\widetilde{p}_{2},$ are accurate in all cases and close to each
other. Furthermore, the truncation proportion has no effect on the quality of
estimations. The two estimations are also not sensitive to small cell
probabilities as for truncations including the value $y_{8}=7.$ It should be
noted that the classical estimation by maximum likelihood without truncation
is $\widehat{p}=0.297,$ and considering our approach we obtained the
estimations $\widetilde{p}_{1}=0.3053$ for the first method and $\widetilde
{p}_{2}=0.2978$ for the second.\medskip\newline\textbf{Normal distribution.
}Consider a sample of size $400$ drawn from a normal population with mean
$m=0$ and standard deviation $\sigma=1.$ Consider the data falling in $11$
fixed class intervals as shown in the following table, with mid-classes
$u_{i}$ and absolute frequencies $n_{i}$\newline\textbf{Table 3.}%
\smallskip\newline$%
\begin{tabular}
[c]{cccccccccccc}\hline
$y_{i}$ & $-2.581$ & $-2.06$ & $-1.533$ & $-1.009$ & $-0.485$ & $0.039$ &
$0.563$ & $1.086$ & $1.610$ & $2.134$ & $2.658$\\\hline
$n_{i}$ & $5$ & $8$ & $23$ & $48$ & $71$ & $89$ & $72$ & $43$ & $25$ & $10$ &
$6$\\\hline
\end{tabular}
\medskip$

The number of bins can be selected from an optimal procedure developed by
Birg\'{e} and Rozenholc \cite{Bir1}. Let the following table where we estimate
simultaneously $m$ and $\sigma$ by the minimum distance procedure with $dv$.
We denote the estimations by $\tilde{m}_{1}$ and $\tilde{\sigma}_{1}.$ In each
line of the table the estimates are made starting from the table of
frequencies based on the observations indicated in the first column. The
truncated sample size is denoted by $n_{t}.$ We have then a truncation
proportion of $Q=100(n-n_{t})/n$ in data.

\begin{center}
\textbf{Table 4.}\smallskip\newline%

\begin{tabular}
[c]{ccccc}\hline
$S$ & $n_{t}$ & $Q\%$ & $\widetilde{m}_{1}$ & $\tilde{\sigma}_{1}$\\\hline
$\left\{  y_{1},y_{2},y_{3},y_{4},y_{5},y_{6},y_{7},y_{8},y_{9},y_{10}%
,y_{11}\right\}  $ & $400$ & $0$ & $0.083$ & $1.130$\\
$\left\{  y_{1},y_{2},y_{3},y_{4},y_{5},y_{6},y_{7},y_{8},y_{9}\right\}  $ &
$384$ & $4$ & $0.003$ & $1.092$\\
$\left\{  y_{2},y_{3},y_{4},y_{5},y_{6},y_{7},y_{8},y_{9}\right\}  $ & $379$ &
$5.25$ & $0.054$ & $0.977$\\
$\left\{  y_{3},y_{4},y_{5},y_{6},y_{7},y_{8},y_{9}\right\}  $ & $371$ &
$7.25$ & $0.052$ & $0.993$\\
$\left\{  y_{4},y_{5},y_{6},y_{7},y_{8},y_{9}\right\}  $ & $348$ & $13$ &
$0.043$ & $1.017$\\
$\left\{  y_{5},y_{6},y_{7},y_{8},y_{9}\right\}  $ & $300$ & $25$ & $0.052$ &
$1.012$\\
$\left\{  y_{3},y_{4},y_{5},y_{6}\right\}  $ & $231$ & $42.25$ & $0.303$ &
$1.104$\\
$\left\{  y_{6},y_{7},y_{8},y_{9}\right\}  $ & $229$ & $42.75$ & $-0.225$ &
$1.140$\\
$\left\{  y_{6},y_{7},y_{8}\right\}  $ & $204$ & $49$ & $-0.065$ & $1.052$\\
$\left\{  y_{3},y_{5},y_{7}\right\}  $ & $166$ & $58.5$ & $0.052$ & $0.993$\\
$\left\{  y_{2},y_{3},y_{4},y_{5}\right\}  $ & $150$ & $62.5$ & $-0.137$ &
$0.904$\\
$\left\{  y_{3},y_{4},y_{5}\right\}  $ & $142$ & $64.5$ & $-0.151$ &
$0.893$\\\hline
\end{tabular}

\end{center}

\begin{remark}
In practice, the bins are in fact chosen after obtaining the truncated sample
so the results should be more efficient, but this does not affect the
preceding results obtained after grouping the whole sample and truncate from
the bins since the aim is to give some feel about the accuracy of the
estimations. Also we can avoid grouping the observations by considering
empirical frequencies obtained from kernel density estimations.
\end{remark}

\subsubsection{Gamma probability density}

Consider a sample of size $800$ drawn from a Gamma distribution $G(a,b)$ with
density given by%
\begin{equation}
f(x\mid a,b)=\frac{1}{b^{a}\Gamma(a)}x^{a-1}\exp\left(  -\frac{x}{b}\right)
,\text{ \ \ }x\geq0,
\end{equation}
and parameters $a=7$ and $b=3.$ Consider the data falling in $16$ fixed class
intervals as shown in the following table, with mid-classes $u_{i}$ and
absolute frequencies $n_{i}:$

\textbf{Table 5.}\smallskip

$%
\begin{tabular}
[c]{ccccccccccc}\hline
$u_{i}$ & $5.89$ & $8.72$ & $11.56$ & $14.39$ & $17.23$ & $20.06$ & $22.89$ &
$25.73$ & $28.56$ & $31.39$\\\hline
$n_{i}$ & $11$ & $40$ & $60$ & $108$ & $118$ & $104$ & $100$ & $74$ & $63$ &
$53$\\\hline
\end{tabular}
\ $%

\begin{tabular}
[c]{cccccc}\hline
$34.23$ & $37.06$ & $39.89$ & $42.73$ & $45.56$ & $48.39$\\\hline
$27$ & $21$ & $11$ & $5$ & $3$ & $2$\\\hline
\end{tabular}
\medskip

In the following table we show the estimations $\widetilde{b}_{1}$ from the
minimum of distance $d_{v}$ and $\widetilde{b}_{2}$ by the second method for
the parameter $b,$ with known $a=10,$ according to the truncation $\triangle$
considered.\newpage

\begin{center}
\textbf{Table 6.} The estimations $\widetilde{b}_{1}$ and $\widetilde{b}_{2}$
by the new approach of the parameter $b$

of the Gamma probability distribution $G(a,b)$ with $b=3$ and known
$a=7$.\smallskip%

\begin{tabular}
[c]{cccccc}\hline
$n%
{{}^\circ}%
$ & $\triangle$ & $n_{t}$ & $Q$ $(\%)$ & $\widetilde{b}_{1}$ & $\widetilde
{b}_{2}$\\\hline
$1$ & $\left\{  u_{1},u_{2},u_{3},u_{4},u_{5},u_{6},u_{7},u_{8},u_{9}\right.
$ & $800$ & $0$ & $3.018$ & $3.054$\\
& $\left.  u_{10},u_{11},u_{12},u_{13},u_{14},u_{15},u_{16}\right\}  $ &  &  &
& \\
$2$ & $\left\{  u_{2},u_{3},u_{4},u_{5},u_{6},u_{7},u_{8},u_{9},\right.  $ &
$787$ & $1.625$ & $2.980$ & $3.065$\\
& $\left.  u_{10},u_{11},u_{12},u_{13},u_{14},u_{15}\right\}  $ &  &  &  & \\
$3$ & $\left\{  u_{1},u_{2},u_{3},u_{4},u_{5},u_{6},u_{7},u_{8},u_{9}%
,u_{10},u_{11},u_{12}\right\}  $ & $779$ & $2.625$ & $3.012$ & $3.068$\\
$4$ & $\left\{  u_{1},u_{2},u_{3},u_{4},u_{5},u_{6},u_{7},u_{8},u_{9}%
,u_{10}\right\}  $ & $731$ & $8.625$ & $2.895$ & $3.059$\\
$5$ & $\left\{  u_{2},u_{3},u_{4},u_{5},u_{6},u_{7},u_{8},u_{9},u_{10}%
\right\}  $ & $720$ & $10$ & $3.063$ & $3.075$\\
$6$ & $\left\{  u_{3},u_{4},u_{5},u_{6},u_{7},u_{8},u_{9},u_{10}\right\}  $ &
$680$ & $15$ & $3.157$ & $3.119$\\
$7$ & $\left\{  u_{1},u_{2},u_{3},u_{4},u_{5},u_{6},u_{7},u_{8},u_{9}\right\}
$ & $678$ & $15.25$ & $2.864$ & $3.002$\\
$8$ & $\left\{  u_{2},u_{3},u_{4},u_{5},u_{6},u_{7},u_{8},u_{9}\right\}  $ &
$667$ & $16.625$ & $2.978$ & $3.018$\\
$9$ & $\left\{  u_{3},u_{4},u_{5},u_{6},u_{7},u_{8},u_{9}\right\}  $ & $627$ &
$21.625$ & $3.086$ & $3.062$\\
$10$ & $\left\{  u_{1},u_{2},u_{3},u_{4},u_{5},u_{6},u_{7},u_{8}\right\}  $ &
$615$ & $23.125$ & $2.859$ & $2.960$\\
$11$ & $\left\{  u_{2},u_{3},u_{4},u_{5},u_{6},u_{7},u_{8}\right\}  $ & $604$
& $24.5$ & $2.908$ & $2.977$\\
$12$ & $\left\{  u_{4},u_{5},u_{6},u_{7},u_{8},u_{9}\right\}  $ & $567$ &
$29.125$ & $3.046$ & $3.016$\\
$13$ & $\left\{  u_{2},u_{3},u_{4},u_{5},u_{6},u_{7}\right\}  $ & $530$ &
$33.75$ & $2.908$ & $2.978$\\
$14$ & $\left\{  u_{2},u_{3},u_{4},u_{5},u_{10},u_{11},u_{12},u_{13}%
,u_{14}\right\}  $ & $443$ & $44.625$ & $3.018$ & $3.080$\\
$15$ & $\left\{  u_{1},u_{2},u_{3},u_{4},u_{5},u_{6}\right\}  $ & $441$ &
$44.875$ & $2.775$ & $2.894$\\
$16$ & $\left\{  u_{1},u_{2},u_{3},u_{4},u_{8},u_{9},u_{10},u_{11}%
,u_{15}\right\}  $ & $439$ & $45.125$ & $2.969$ & $3.048$\\
$17$ & $\left\{  u_{1},u_{2},u_{3},u_{4},u_{5},u_{11},u_{12},u_{13}%
,u_{14},u_{15},u_{16}\right\}  $ & $406$ & $50.75$ & $3.018$ & $3.031$\\
$18$ & $\left\{  u_{1},u_{2},u_{3},u_{4},u_{5}\right\}  $ & $337$ & $57.875$ &
$2.788$ & $2.931$\\
$19$ & $\left\{  u_{8},u_{9},u_{10},u_{11},u_{12},u_{13},u_{14},u_{15}%
,u_{16}\right\}  $ & $256$ & $67.625$ & $2.990$ & $3.212$\\
$20$ & $\left\{  u_{10},u_{11},u_{12},u_{13},u_{14},u_{15},u_{16}\right\}  $ &
$122$ & $84.75$ & $2.894$ & $2.822$\\\hline
\end{tabular}

\end{center}

The estimations from the two methods are also accurate in this case of gamma
distribution for the parameter $b.$ Here also the truncation proportion does
not affect the quality of estimations. When we consider the complete data, the
classical estimation is $\widehat{b}=3.04$ and the two new estimations are
$\widetilde{b}_{1}=3.018$ and $\widetilde{b}_{2}=3.054.$

As it was noticed in the examples above, the two methods lead to approximately
the same estimation results. Nevertheless, if the two estimations are
significantly different, it seems related to the quality of the selected data.
An important feature of this new approach is that the quality of estimations
is uninfluenced by the truncation proportion. The following section will give
further insights of the new approach.

\subsection{A Basic Feature of the New Approach}

The preceding results have shown the effectiveness of the new approach and
worked well in simulation experiments. Furthermore, the proposition below will
give an insight of a major feature of the new approach by considering the one
parameter exponential family. We will prove that for all truncation considered
formed by more than two points, from a sample of observations; if the ratios
of the relative frequencies of the $u_{i}$ are equal to the theoretical ones,
then we may obtain the true value of the parameter. We may conjecture that
when considering an arbitrary law of probability depending on $r$ parameters,
such that we have a truncation composed by $r+1$ points having exact empirical
ratios of the relative frequencies then we obtain the true values of the $r$ parameters.

\begin{proposition}
Consider a probability distribution $f$ from the one-parameter exponential
family with density%
\begin{equation}
f(x,\theta)=K(x)\times\exp[\theta T(x)-A(\theta)], \label{expo-fam}%
\end{equation}
where $\theta\in\mathbb{R}$ is the parameter, $T$ a statistic, $K(x)$ a
function of $x$ and $A$ is a function of the parameter $\theta.$ Assume that
we wish to estimate the parameter $\theta.$ If we consider a truncation having
two points\textbf{ }$x$\textbf{\ }and $y$\textbf{\ }with empirical
frequencies\textbf{ }$f_{1}$\ and $f_{2}$\textbf{\ }satisfying\textbf{ }%
$f_{1}/f_{2}=f(x,\theta)/f(y,\theta)$, then, using the approach considered
here, we obtain the true value of $\theta.$
\end{proposition}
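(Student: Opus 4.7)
The plan is to verify both prongs of the ``new approach'' separately, relying on a single monotonicity fact about the one-parameter exponential family as the common workhorse. Writing $\theta_{0}$ for the true parameter, I would first note that for any two distinct support points $x \neq y$ with $T(x) \neq T(y)$,
\begin{equation*}
R(\theta) := \frac{f(x,\theta)}{f(y,\theta)} = \frac{K(x)}{K(y)} \exp\bigl[\theta\bigl(T(x)-T(y)\bigr)\bigr]
\end{equation*}
is a strictly monotone, hence injective, function of $\theta$. This is essentially the only place where the exponential-family structure (\ref{expo-fam}) enters the argument.

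For the minimum-distance method, when the truncation has $m=2$ points the sum defining $d_{v}(\widehat{f},f(\cdot,\theta))$ collapses to
\begin{equation*}
\left| \frac{f_{1}}{f_{2}} - R(\theta) \right| + \left| \frac{f_{2}}{f_{1}} - \frac{1}{R(\theta)} \right|,
\end{equation*}
which is non-negative and vanishes precisely when $R(\theta) = f_{1}/f_{2}$. By hypothesis this equation is solved by $\theta_{0}$, so $\theta_{0}$ minimizes $d_{v}$; injectivity of $R$ then forces $\theta_{0}$ to be the unique minimizer, whence $\widetilde{\theta}_{1} = \theta_{0}$.

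For the auxiliary-distribution method, I would invoke the score equation (\ref{famexpo2}) already derived in the paper, which in the two-point case $\triangle = \{x,y\}$ reduces to
\begin{equation*}
f_{1} T(x) + f_{2} T(y) = \frac{T(x)\, f(x,\theta) + T(y)\, f(y,\theta)}{f(x,\theta) + f(y,\theta)}.
\end{equation*}
Combining the hypothesis $f_{1}/f_{2} = f(x,\theta_{0})/f(y,\theta_{0})$ with $f_{1}+f_{2}=1$ yields $f_{1} = h(x,\theta_{0})$ and $f_{2} = h(y,\theta_{0})$, so the left side equals the right side evaluated at $\theta_{0}$; hence $\theta_{0}$ satisfies the score equation. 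Rewriting the right side as $T(x) + (T(y)-T(x)) \cdot R(\theta)^{-1}/\bigl(1+R(\theta)^{-1}\bigr)$ and composing the monotone map $z \mapsto z/(1+z)$ with the monotone $\theta \mapsto 1/R(\theta)$ shows the right side is itself strictly monotone in $\theta$. This delivers uniqueness of the root and so $\widetilde{\theta}_{2} = \theta_{0}$.

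The only real subtlety, which I expect to be the place needing explicit care, is the implicit non-degeneracy assumption $T(x) \neq T(y)$: if $T$ takes the same value at the two truncation points, $R$ is constant in $\theta$, the hypothesis no longer pins down $\theta$, and neither method can recover $\theta_{0}$. Since this is a natural requirement (otherwise the two chosen points carry no information about $\theta$), I would state it explicitly alongside the proposition rather than leave it tacit.
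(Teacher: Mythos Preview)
Your argument is correct and, for the minimum-distance half, essentially identical to the paper's one-line observation that $d_{v}$ attains its minimum value $0$ at the true parameter. Where you diverge is in the treatment of the second method. The paper works with the first-moment matching equation $\sum u_i h(u_i,\theta)=\overline{u}$ (so the raw points $x,y$ appear, not $T(x),T(y)$), substitutes the hypothesis to express $\overline{u}$ in terms of the true $\widetilde{\theta}$, and then performs a direct algebraic reduction until the equation collapses to $(x-y)\bigl[\exp(\widetilde{\theta}T(y)+\theta T(x))-\exp(\widetilde{\theta}T(x)+\theta T(y))\bigr]=0$, whence $\theta=\widetilde{\theta}$. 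You instead invoke the score equation (\ref{famexpo2}) in its $T$-form and then argue uniqueness via the strict monotonicity of $R(\theta)$ composed with $z\mapsto z/(1+z)$. Both routes are short and rest on the same exponential-family cancellation; yours has the advantage of packaging both methods under a single injectivity principle and of surfacing the non-degeneracy condition $T(x)\neq T(y)$, which the paper uses tacitly in its final step but never states. The paper's direct computation, on the other hand, makes the cancellation fully explicit and matches the moment equation actually used in the worked binomial and gamma examples.
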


\begin{proof}
1. If we consider the minimum of distance $d_{v}$ the result is
immediate.\newline2. Consider now the second method to estimate $m$. Consider
two values $x$ and $y$ from the exponential family with density given by
(\ref{expo-fam})$,$ with $\widetilde{\theta}$ being the estimation by the new
approach, and assume that their empirical frequencies $f_{1}$\ and $f_{2}$ are
such that%
\[
\frac{f_{1}}{f_{2}}=\frac{f(x,\widetilde{\theta})}{f(y,\widetilde{\theta})}.
\]
We obtain%
\[
\overline{u}=xf_{1}+yf_{2}=\frac{xK(x)\exp\left(  \widetilde{\theta
}T(x)\right)  +yK(y)\exp\left(  \widetilde{\theta}T(y)\right)  }%
{K(x)\exp\left(  \widetilde{\theta}T(x)\right)  +K(y)\exp\left(
\widetilde{\theta}T(y)\right)  }.
\]
Then, we solve on $\theta$ the following equation:%
\[
\left(  x-\frac{xK(x)\exp\left(  \widetilde{\theta}T(x)\right)  +yK(y)\exp
\left(  \widetilde{\theta}T(y)\right)  }{K(x)\exp\left(  \widetilde{\theta
}T(x)\right)  +K(y)\exp\left(  \widetilde{\theta}T(y)\right)  }\right)
K(x)\exp\left(  \theta T(x)\right)
\]%
\[
+\left(  y-\frac{xK(x)\exp\left(  \widetilde{\theta}T(x)\right)
+yK(y)\exp\left(  \widetilde{\theta}T(y)\right)  }{K(x)\exp\left(
\widetilde{\theta}T(x)\right)  +K(y)\exp\left(  \widetilde{\theta}T(y)\right)
}\right)  K(y)\exp\left(  \theta T(y)\right)  =0,
\]
after straightforward algebra we obtain%
\[
(x-y)\exp\left(  \widetilde{\theta}T(y)+\theta T(x)\right)  +(y-x)\exp\left(
\widetilde{\theta}T(x)+\theta T(y)\right)  =0,
\]
yielding the true value $\widetilde{\theta}=\theta.$ The proof is complete.
\end{proof}

\begin{remark}
Note that the frequencies $f_{1}$ and $f_{2}$ need not be exact, that is
$f_{1}$ may be different from $f(x,\theta)$ and also $f_{2},$ but we require
only that their ratio is equal to the theoretical one $f(x,\theta
)/f(y,\theta).\medskip$
\end{remark}

\textbf{Examples}\newline\textbf{Binomial distribution. }Consider again the
binomial distribution $\mathcal{B}(n,p)$ with $n=10$ and $p=0.3$ and assume
$n$ is known and we wish to estimate $p.$ Assume we have the following
truncation with only two points $\triangle=\left\{  0,1\right\}  .$ The exact
ratio of their probability distribution is given by $f(0,p)/f(1,p)=7/30,$
which is a rational value that will simplify the example. Choose the absolute
frequencies of the two values considered as being $n_{1}=7$ and $n_{2}=30$ for
the values $u_{1}=0$ and $u_{2}=1$ respectively, in order for having
$f_{1}/f_{2}=f(x,p)/f(y,p)=7/30.$ Using the first approach, that of the
minimum of distance $d_{v},$ we have to solve%
\[
\min_{p}d_{v}(\widehat{f},f)=\min_{p}\left[  \left\vert \frac{C_{10}%
^{0}(1-p)^{10}}{C_{10}^{1}p(1-p)^{9}}-\frac{7}{30}\right\vert +\left\vert
\frac{C_{10}^{1}p(1-p)^{9}}{C_{10}^{0}(1-p)^{10}}-\frac{30}{7}\right\vert
\right]  ,
\]
and we get the true value $\widetilde{p}_{1}=0.3$.

Using the second method we have to solve the following equation on $p$%
\[
\frac{0\times C_{10}^{0}(1-p)^{10}+1\times C_{10}^{1}p(1-p)^{9}}{C_{10}%
^{0}(1-p)^{10}+C_{10}^{1}p(1-p)^{9}}=\frac{30}{37},
\]
and we obtain also the exact result $\widetilde{p}_{2}=0.3.\medskip$%
\newline\textbf{Gamma distribution. }Consider the Gamma probability
distribution $G(a,b)$ with $a=10$ and $b=5.$ Assume that $a$ is known and we
wish to estimate $b.$ Consider the truncation $\triangle=\left\{  u_{3}%
,u_{8}\right\}  $ with $u_{3}=30.13$ and $u_{8}=60.02$. We have the following
value of the ratio $f\left(  u_{3},b\right)  /f\left(  u_{8},b\right)
\approx0.799$ (the result is an approximate result since for probability
density functions it is difficult to get an exact rational value but we will
show that the estimations are very close to the true value). Consider the
absolute frequencies $n_{3}=79.93$ (or $80$) and $n_{8}=100$ for the values
$u_{3}=30.13$ and $u_{8}=60.02$ respectively. We have then $n_{3}/n_{8}\approx
f\left(  u_{3},b\right)  /f\left(  u_{8},b\right)  .$ Using the minimum of
distance $d_{v},$ we have to solve%
\[
\min_{b}d_{v}(\widehat{f},f)=\min_{b}\left[  \left\vert
(79.93/100)-((30.13/60.02)^{9}\times\exp(-(1/b)\times
(30.13-60.02)))\right\vert \right.
\]%
\[
\left.  +(100/79.93)-((60.02/30.13)^{9}\times\exp(-(1/b)\times
(60.02-30.13)))\right]  ,
\]
and we get the result $\widetilde{b}_{1}\approx5.$

From the second method, we compute $\overline{u}=46.7438$ and solve on $b$ the
following equation%
\[
(30.13-46.7438)\times30.13^{9}\times\exp(-30.13/b)
\]%
\[
+(60.02-46.7438)\times60.02^{9}\times\exp(-60.02/b)=0.
\]
The result is $\widetilde{b}_{2}\approx5.$

Now assume that the parameters $a$ and $b$ are unknown and show how to jointly
estimate them using the new approach. Since now there are two unknown
parameters, we need to have three points from the support, so consider
$u_{1}=34.7702$, $u_{2}=57.5008$ and $u_{3}=74.5487$ with their corresponding
absolute frequencies $n_{1}=102,n_{2}=100$ and $n_{3}=34.$ We have to find $a$
and $b$ which minimize the distance $d_{v}$ that is $\min_{a,b}d_{v}%
(\widehat{f},f).$ The result is $\widetilde{a}\approx10.0454$ and
$\widetilde{b}\approx4.9739.$

\section{Elements of Comparison with the Classical Approach}

Our aim here is not to give a detailed comparison study which needs to be
investigated thoroughly, but only some elements of appreciation. A major
feature which characterizes this new approach from the others is that when we
have exact ratios of frequencies we obtain the true parameter and when their
difference from the theoretical ratios decrease the quality of estimation
increase even if we are using only a part from the sample of observations.
This is not the case for classical approaches. In classical approaches,
quality considerations are only viewed through mean properties of estimators
or their asymptotic behaviour. By combining the two proposed methods we have
in fact a point criterion. Another characteristics is that the proportion of
truncation has any effect on the quality of estimations. The first method uses
a well known method of minimum distance but with a new one which has an
important advantage of being symmetric, the property of which many traditional
distances do not have. However, the estimations are obtained in this case
implicitly so it is difficult to find explicit expressions and study their
properties to compare them with classical ones. Using the new distance we hope
having fast convergent estimators since we expect that the influence of the
errors in the frequencies will be slight in the new approach as we are using
ratios of frequencies. Consider now the second method of the new approach. We
use classical procedures of estimation such as the maximum likelihood
principle using the auxiliary distribution. We may obtain the estimators and
study their properties as commonly used and then preserves the advantages of
classical methods. In classical approach, given a sample, the estimation of
certain parameters such as the mean and variance do not change according to
the family of parent distributions. The latter information is not used and
this disadvantages the approach. However, in the new approach the estimations
of the mean and variance change according to the distribution from which the
data emanated.

The following two examples show the effectiveness of using the auxiliary
distribution.\medskip\newline\textbf{Example. }Consider the following
frequency table:

\begin{center}
\textbf{Table 7.\smallskip}%

\begin{tabular}
[c]{cccc}\hline
$x_{i}$ & $2$ & $3$ & $Total$\\\hline
$n_{i}$ & $n_{1}$ & $n_{2}$ & $n$\\
$\widehat{f}\left(  x_{i}\right)  =f_{i}$ & $f_{1}=\left(  n_{1}/n\right)  $ &
$f_{2}=\left(  n_{2}/n\right)  $ & $1$\\\hline
\end{tabular}

\end{center}

Any sample of observations that satisfies the preceding frequency table may
belong from one of the following distributions:%
\[
g_{1}\left(  x\right)  =\left\{
\begin{array}
[c]{c}%
\frac{x}{6}\\
0
\end{array}%
\begin{array}
[c]{c}%
if\text{ }x\in\left\{  1,2,3\right\}  ,\\
otherwise,
\end{array}
\right.  \text{ \ \ or \ \ }g_{2}\left(  x\right)  =\left\{
\begin{array}
[c]{c}%
\frac{x-1}{6}\\
0
\end{array}%
\begin{array}
[c]{c}%
if\text{ }x\in\left\{  2,3,4\right\}  ,\\
otherwise.
\end{array}
\right.
\]
The decision for determining which of the two distributions is more
appropriate for table 7, depends intuitively on the values $n_{1}$ and $n_{2}$
(or $f_{1}$ and $f_{2}$). However, if we use the classical maximum likelihood,
we obtain that the samples of observations were generated from distribution
$h_{1}$ whatever the values of $n_{1}$ and $n_{2},$ that is:%
\[
\left(  \frac{1}{6}\right)  ^{n_{1}}\times\left(  \frac{2}{6}\right)  ^{n_{2}%
}<\left(  \frac{2}{6}\right)  ^{n_{1}}\times\left(  \frac{3}{6}\right)
^{n_{2}}.
\]
We will show by using the new approach that the decision is more relevant.
Determine first the auxiliary distributions, $h_{1}$ and $h_{2}$, based on the
truncation $\triangle=\left\{  2,3\right\}  $, for $g_{1}$ and $g_{2}$
respectively. We obtain%
\[
h_{1}(x)=\left\{
\begin{array}
[c]{c}%
2/5\\
3/5\\
0
\end{array}%
\begin{array}
[c]{c}%
\text{ \ \ }if\text{ \ \ }x=2,\\
\text{ \ \ }if\text{ \ \ }x=3,\\
otherwise,
\end{array}
\right.  \text{ \ \ and \ \ }h_{2}(x)=\left\{
\begin{array}
[c]{c}%
1/3\\
2/3\\
0
\end{array}%
\begin{array}
[c]{c}%
\text{ \ \ }if\text{ \ \ }x=2,\\
\text{ \ \ }if\text{ \ \ }x=3,\\
otherwise.
\end{array}
\right.
\]
By using the maximum likelihood for $h_{1}$ and $h_{2}$, we have to decide
according to the quantities $\left(  2/5\right)  ^{n_{1}}\times\left(
3/5\right)  ^{n_{2}}$ and $\left(  1/3\right)  ^{n_{1}}\times\left(
2/3\right)  ^{n_{2}}.$ Solving the following inequality
\[
\left(  \frac{2}{5}\right)  ^{n_{1}}\times\left(  \frac{3}{5}\right)  ^{n_{2}%
}\leq\left(  \frac{1}{3}\right)  ^{n_{1}}\times\left(  \frac{2}{3}\right)
^{n_{2}},
\]
which is equivalent to $\left(  6/5\right)  ^{\alpha}\left(  9/10\right)
^{1-\alpha}\leq1,$ where $\alpha=n_{1}/n_{2},$ we obtain $0<\alpha\leq
-\log(9/10)/\log(4/3)=x_{0}\approx0.36624.$ If $0<\alpha<x_{0},$ the data were
generated from $g_{2}$ and if $x_{0}<\alpha<1$, the data were generated from
$g_{1}.$ We cannot make any decision about the case $\alpha=x_{0}.\medskip
$\newline\textbf{Example. }Consider a binomial distribution with parameters
$n=4$ and $p$ is unknown, from which we consider some samples of observations
of size 15 given in table 8 by their absolute frequencies and chosen in order
for having $\overline{x}=8/15.$

\begin{center}
\textbf{Table 8.\smallskip}%

\begin{tabular}
[c]{|c|ccccc|cc|}\hline
& \multicolumn{5}{|c|}{Values} & \multicolumn{2}{|c|}{}\\\cline{2-6}%
\cline{6-6}%
samples & $0$ & $1$ & $2$ & $3$ & $4$ & $\widehat{p}$ & $\widetilde{p}%
$\\\cline{1-6}\cline{2-8}%
$1$ & $7$ & $8$ & $0$ & $0$ & $0$ & $0.133$ & $0.222$\\
$2$ & $9$ & $5$ & $0$ & $1$ & $0$ & $0.133$ & $0.184$\\
$3$ & $9$ & $4$ & $2$ & $0$ & $0$ & $0.133$ & $0.139$\\
$4$ & $10$ & $3$ & $1$ & $1$ & $0$ & $0.133$ & $0.134$\\
$5$ & $10$ & $4$ & $0$ & $0$ & $1$ & $0.133$ & $0.216$\\
$6$ & $12$ & $0$ & $2$ & $0$ & $1$ & $0.133$ & $0.196$\\
$7$ & $13$ & $0$ & $0$ & $0$ & $2$ & $0.133$ & $0.385$\\\cline{1-6}\cline{3-8}%
\end{tabular}

\end{center}

It is clear that the information given by the samples are not the same,
nevertheless the classical estimation method gives us the same estimation
$\widehat{p}=8/(15\times4)\approx0.133.$ If we use the second method of the
new approach, we have to solve the following equation for each sample:%
\[
0\times h(0,p)+1\times h(1,p)+2\times h(2,p)+3\times h(3,p)+4\times
h(4,p)=\overline{x},
\]
where $h(x,p)$ is the corresponding auxiliary distribution. The estimations
given by the new method differ from sample to another as shown in the latest
column of table 8, which is natural since each sample provides a different
information about the parent distribution. We can also use the minimum of
distance $d_{v}$ and we get also the same conclusion.

\section{Perspectives for the New Approach}

\subsection{Model Selection From Truncated Data}

The fact that the distance $d_{v}$ is a metric allow to propose various
applications of this new measure. We can use it for model selection amongst
different probability families. We choose two or more possible candidate
parametric families of distributions, and for each alternative family,
estimate the parameters to select a specific candidate. Determine the distance
between the specific candidate and the empirical distribution using the new
metric $d_{v}.$ Finally, select the family which yields the minimum distance.
In view of the new approach this can also be done in case of truncated data as
opposed to classical approaches (see for example Cox \cite{Cox1},
\cite{Cox2}), Taylor and Jakeman \cite{Tay}) for model selection which can be
used, from the best of our knowledge, only for complete data.

To investigate this perspective thoroughly, samples of various sizes from
known distributions should be simulated, and the method for model selection
applied, we can score the selection as correct or not after repeating the
process a large number of times, the probability of correct selection could be
estimated according to a given sample size.

We can also use the new distance in cases where classical goodness of fit
tests cannot reject two candidate families. We can choose the one which yields
the minimum of distance $d_{v}.$

In the following examples, we shall select, in the first, between binomial
distributions from truncated data. In the second example, we select between a
Weibull and a Gamma distributions from right truncated data.\medskip
\newline\textbf{Selection from Binomial distributions. }We simulated $10000$
samples of size $100$ from a Binomial distribution $\mathcal{B(}%
8,0.1\mathcal{)}$ and each time we retained only the observations belonging
from $\{0,1,2,3\}$ with their frequencies. Then we tried to identify the law
simulated starting from the corresponding table of frequencies. We used the
distance $d_{v}$ to select between the original distribution of each simulated
sample and the distribution $\mathcal{B(}10,0.15\mathcal{)}$ and we score the
selection as correct if the distance between the empirical distribution and
the original one is less than with the alternative one $\mathcal{B(}%
15,0.15\mathcal{)}.$ The correct distribution was selected $98,8\%$.
Conversely, we simulated $10000$ samples of size $100$ from a Binomial
distribution $\mathcal{B(}10,0.15\mathcal{)}$ and we select with
$\mathcal{B(}8,0.1\mathcal{)}$, the correct distribution was selected
$99,43\%.$\medskip\newline\textbf{Selection between Weibull and Gamma
distributions. }We simulated $10000$ samples of size $1000$ from the weibull
distribution $\mathcal{W}(1.2,1.5)$ and we truncated them on right by
considering only observations above the cut-off $1.25$. Each truncated sample
was summarized into $11$ classes. We selected between $\mathcal{W}(1.2,1.5)$
and the Gamma distribution $G\left(  2,0.5\right)  .$ The distance $d_{v}$ has
selected the correct distribution, that is $\mathcal{W}(1.2,1.5),$ $98.16\%.$

We can also find, before selecting between distribution, the best fit from the
family of gamma distributions $G\left(  a,b\right)  $ of the truncated data
from a given probability density say $\mathcal{W}(1.2,1.5).$ We have then to
solve an optimization problem of finding the minimum of a function of two
variables, $\min_{a,b}d_{v}(\widehat{f},f)$ where $\widehat{f}$ is the
empirical distribution and $f\equiv G\left(  a,b\right)  $, using well known
methods such as Lavenberg-Marquardt using a computer algebra package. Also it
should be better to choose the number of bins for each truncated sample by an
optimal procedure, for example that of Birg\'{e} and Rozenholc \cite{Bir1}.

\subsection{Estimation of the initial trial value in \textit{EM }Algorithm}

The initial starting value is of great importance in convergence behaviour of
algorithms such as \textit{EM }Algorithm. Usually, as for the latter, the
initial trial value is guessed. Surprisingly, we will show that our procedure
gives an estimation of the starting value instead of having to guess. The
approach will be illustrated by the following classical example which was the
basis of the \textit{EM} algorithm.\medskip\newline\textbf{Example of Hartley
(1958) revisited. }Hartley \cite{Hart} used an algorithmic procedure to
estimate the parameter of a Poisson distribution from data on the pollution of
a sort of seeds by the presence of noxious weed seeds quoted from Snedecor
\cite{Sne} and truncated them by missing the frequencies of the values $0$ and
$1$ as shown in the following table 9 (Table 1 in Hartley \cite{Hart})\medskip

\begin{center}
\textbf{Table 9.\smallskip}%

\begin{tabular}
[c]{ccccccccccc}\hline
Values & missing & $0$ & $1$ &  &  &  &  &  &  & \\
& observed &  &  & $2$ & $3$ & $4$ & $5$ & $6$ & $7$ & $9$\\\hline
frequencies $n_{i}$ &  &  &  & $26$ & $16$ & $18$ & $9$ & $3$ & $5$ &
$1$\\\hline
\end{tabular}

\end{center}

Hartley \cite{Hart} has guessed the frequencies of the missing values $0$ and
$1$ by taking $n_{0}=4$ and $n_{1}=14,$ and after 4 steps of his algorithmic
procedure, which has been the basis of the well known EM algorithm for
incomplete data (Dempster, Laird and Rubin \cite{Demp}), has reached the
estimation $\widehat{\lambda}=3.026$ (see table 1 p.177 Hartley \cite{Hart}).
Using the second method, we get the estimation $\widetilde{\lambda}%
_{2}=3.1149.$ And by proportional allocation procedure we can see that the
frequencies we get are $n_{0}=4.29$ and $n_{1}=13.38$ which are close to the
guessed values. Using the distance $d_{v}$ we obtain the estimation
$\widetilde{\lambda}_{1}=3.8447,$ and by removing the last value which has a
small frequency $n_{7}=1,$ we obtain a better result $\widetilde{\lambda}%
_{1}=3.4441,$ which are also appreciable as starting values since in practice
the true parameter is unknown.\medskip\newline\textbf{Initial trial value for
mixture Normal Populations. }We shall present an application of the previous
method used for truncated data in the situation where we have a mixture
population of two normal distributions. In classical methods, we use the
merged distribution\textbf{ }$f=\alpha f_{1}+\left(  1-\alpha\right)  f_{2}%
$\textbf{ }and we estimate the parameters $\alpha,$ $m_{1}$ and $m_{2}%
$\textbf{ }using for example the EM algorithm which is based on maximizing the
complete likelihood of the merged distribution by an algorithmic procedure
from a guessed initial trial value. However, the problem of occurrence of
several local maxima is well-known for the setting of EM algorithm. Also,
Seidel, Mosler and Alker \cite{Sei} pointed out that the likelihood-ratio test
in mixture models depends on the choice of the initial trial value for the EM
algorithm. If the initial trial value is close to the true value it is clear
that the algorithm will converge in few steps to the true local maximum. We
will show that using the new approach we get an accurate estimated initial
trial value.

Assume we have a merged sample from two samples of observations of sizes
$n_{1}$ and $n_{2}$ from two normal distributions $f_{1}=N(m_{1},\sigma_{1})$
and $f_{2}=N(m_{2},\sigma_{2})$, with $m_{1}\neq m_{2}$. By assuming that
$\sigma_{1}$ and $\sigma_{2}$ are known, our aim is to estimate the means
$m_{1}$ and $m_{2},$ and also the merging proportion $\alpha$ of each population.

We will use a method based on truncations. The main idea being to split the
range of the merged sample into three suitably chosen parts. A central part
where the observations are highly merged, a left and right truncated parts
where the observations become mainly from one of the distributions considered.
If for example $m_{1}<m_{2}$, then to estimate $m_{1}$ we have to use the
chosen right truncated part (left truncation $\triangle$).

The procedure is summarized as follows:

\textbf{1.} We compute the sample mean $m_{g}$ of the merged observations.

\textbf{2.} For determining the location of the two means $m_{1}$ and $m_{2}$,
we compute the empirical standard deviation $S_{l}$ of the observations less
than $m_{g},$ and $S_{r}$ for those that are greater. Assume that $S_{l}%
<S_{r}$, in this case if $\sigma_{1}<\sigma_{2}$ then we deduce that $m_{1}$
is situated on the left of $m_{g}$. Otherwise, it will be assumed to be on its
right. We follow the same idea for the case $S_{l}>S_{r}.$ If $\sigma
_{1}=\sigma_{2}$ we pass directly to the third step.

\textbf{3.} Assume that $m_{1}$ is on the left. It is well known that for a
normal distribution $N\left(  m,\sigma\right)  $ we have $P(\left]
m-\sigma,m+\sigma\right[  )$ $\simeq0.68.$ We hope that on the left of
$\sup_{l}=m_{g}-\sigma_{2}$ the number of observations generated from
$N\left(  m_{2},\sigma_{2}\right)  $ is negligible, and on the right of
$\min_{r}=m_{g}+\sigma_{1}$ the number of observations generated from
$N(m_{1},\sigma_{1})$ is also negligible. Hence, to estimate $m_{1}$, we
consider only the part of observations situated on the left of $m_{g}%
-\sigma_{2}$, and to estimate $m_{2}$ we consider the part situated on the
right of $m_{g}+\sigma_{1}.$

The following example will provide some feel for the accuracy of the
procedure.\newline\textbf{Example. }We consider the case where $\sigma
_{1}=\sigma_{2}.$ consider two samples of observations generated from
$N(m_{1},\sigma_{1})$ and $N(m_{2},\sigma_{2}),$ where $m_{1}=1.3$ and
$m_{2}=2.4$, with known $\sigma_{1}=\sigma_{2}=1$ and sizes $n_{1}=300$ and
$n_{2}=200.$ We combine them to obtain a merged sample of size $n=500.$ We
have chosen the distributions in such a way that the histogram (Fig.1) of the
merged sample does not show directly the existence of a mixture of two
distributions.\ When\ the histogram of the merged population is bimodal the
situation is more easier, since when taking a suitably left (or right) part we
get more accurate estimation from the situation that this part will have a
negligible number of observations from the second distribution.

\begin{center}%
\begin{center}
\includegraphics[
height=3.1574in,
width=3.9271in
]%
{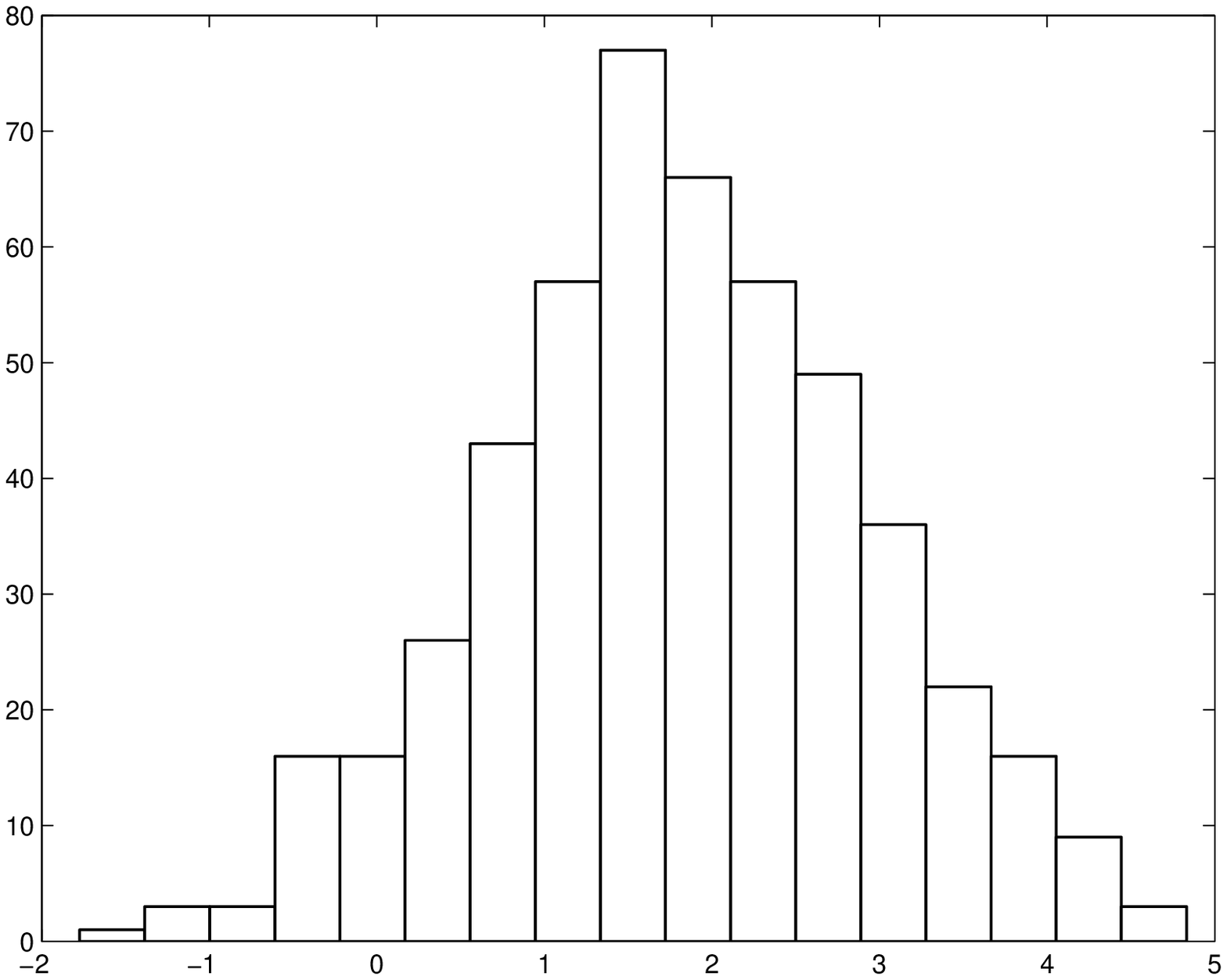}%
\end{center}

Fig 1. Merged histogram of two normal distributions $N(1.3,1)$ and $N(2.4,1).$
\end{center}

It should be stressed that the histogram is one modal and does not show at
first glance any mixture situation. Following the steps of the procedure we
begin by calculating the mean of the resulting merged sample and we obtain
$m_{g}=1.8046$. Since the standard deviations are assumed to be equal then we
compute directly $\sup_{l}=m_{g}-\sigma_{2}=0.8046.$ By grouping the
observations on the left of $\sup_{l}$ (which constitute the chosen right
truncated part) in $7$ classes we obtain the following table:\medskip

\textbf{Table 10.}\smallskip%

\begin{tabular}
[c]{ccccccc}\hline
$u_{i}$ & $-1.5589$ & $-1.1294$ & $-0.6998$ & $-0.2703$ & $-0.1593$ &
$0.5888$\\\hline
$n_{i}$ & $1$ & $3$ & $6$ & $17$ & $24$ & $41$\\\hline
\end{tabular}
\medskip\newline

Using the distance $d_{v}$ we obtain for all the truncation $\widetilde{m}%
_{1}^{(dv)}=1.244$ and by deleting $u_{1}$ we get the value $\widetilde{m}%
_{1}^{(dv)}=1.2516.$

The sample mean of the observations on the left of $\sup_{l}$ is given by
$\overline{u}_{l}=0.1483.$ Using the second method we have to solve on $m$ the
following formula%
\begin{equation}
\frac{u_{1}\times\exp\left[  \frac{-\left(  u_{1}-m\right)  ^{2}}{2\sigma^{2}%
}\right]  +u_{2}\times\exp\left[  \frac{-\left(  u_{2}-m\right)  ^{2}}%
{2\sigma^{2}}\right]  +...+u_{k}\times\exp\left[  \frac{-\left(
u_{k}-m\right)  ^{2}}{2\sigma^{2}}\right]  }{\exp\left[  \frac{-\left(
u_{1}-m\right)  ^{2}}{2\sigma^{2}}\right]  +\exp\left[  \frac{-\left(
u_{2}-m\right)  ^{2}}{2\sigma^{2}}\right]  +...+\exp\left[  \frac{-\left(
u_{k}-m\right)  ^{2}}{2\sigma^{2}}\right]  }=\overline{u}_{l}. \label{m2}%
\end{equation}
we obtain the estimation $\widetilde{m}_{1}=1.2646.$ By deleting the first
value $u_{1}$ which has a weak frequency $n_{1}=1,$ that is using the
truncation $\bigtriangleup=\left\{  u_{2},u_{3},u_{4},u_{5},u_{6}\right\}  ,$
(we compute again $\overline{u}_{l}=0.1734$) we obtain a better estimation
$\widetilde{m}_{1}=1.3011,$ which is very close to the true value $m_{1}=1.3$.

To estimate $m_{2},$ we consider the part situated on the right of $\min
_{r}=m_{g}+\sigma_{1}=2.8046.$ Grouping the observations on the right of
$\inf_{d}$ (which constitute the chosen right part) in $7$ classes we obtain
the following table:\medskip

\textbf{Table 11.}\smallskip%

\begin{tabular}
[c]{ccccccc}\hline
$u_{i}$ & $2.979$ & $3.316$ & $3.653$ & $3.990$ & $4.326$ & $4.663$\\\hline
$n_{i}$ & $38$ & $25$ & $15$ & $9$ & $7$ & $3$\\\hline
\end{tabular}
\medskip

Using the distance $d_{v}$ for all the truncation we get $\widetilde{m}%
_{2}^{(dv)}=2.397.$ The sample mean of the observations on the right part is
given by $\overline{u}_{d}=3.523.$ Using formula (\ref{m2}) with $\overline
{u}_{d}$, we obtain the result $\widetilde{m}_{2}=2.245.$ Deleting the extreme
values $u_{1}$ and $u_{6}$ we obtain $\widetilde{m}_{2}=2.412.$

The mixture proportion $\alpha$ can easily be estimated using the formula
$\alpha\times\widetilde{m}_{1}+\left(  1-\alpha\right)  \times\widetilde
{m}_{2}=m_{g}.$

Considering the estimations obtained, which are close to the true values of
$m_{1}$ and $m_{2},$ it is clear that the EM algorithm will converge fastly to
the unique solutions.

\subsection{Test of Goodness of Fit Based on the New Distance}

We can obtain empirical quantile estimations of $d_{v}$ using Montecarlo or
Bootstrapping technics, and use them in a test of goodness of fit for a
specified probability distribution. We simulate $N$ samples of the same size
from the specified probability distribution and calculate the distances
$d_{v}^{(1)},...,d_{v}^{(N)}.$ We can then estimate the asymptotic
distribution of $d_{v}$ by%
\begin{equation}
F_{d_{v}}(d)=\frac{\#d_{v}^{(i)}<d}{N}.
\end{equation}
Consequently, for a sample of the same size we compute $d_{v}^{(obs)}$ and we
reject the hypothesis that it belongs from the specified distribution if
$F_{d_{v}}(d_{v}^{(obs)})>(1-\alpha)$ for a given level of significance
$\alpha.$

The values $d_{v}^{(1)},...,d_{v}^{(N)}$ may be obtained from the empirical
distribution function $F_{n}$ of the sample.

\subsection{Quality of Data}

The fact that the new measure $d_{v}$ is not equivalent to classical ones
means that it treats other aspects not investigated by the latter. This may
open new perspectives such as making decision about the accuracy of an
estimation in cases where the classical and new estimations are close to each
others. In cases where the classical estimation and the new one using $d_{v}$
are significantly different then we can say that the sample of observations
considered does not restore coherently all necessary information about the
parent distribution from which it emanated.

\section{Concluding Remarks}

In the foregoing study, we have presented a new statistical point estimation
method which found be useful in truncated and grouped and censored data
situations. A new distance between probability distributions was introduced.
It measures the difference between the variations of two given probability
distributions. We introduced an auxiliary distribution based on a truncation,
from a chosen family of probability distributions. This new distribution will
have the same parameters to estimate as the parent one. We use then
statistical methods to estimate the parameters of the random variable under
study using the empirical and new auxiliary distribution in the region that
captures the data, from which we determine the corresponding parent
distribution. The later is the estimation by the new method. Using the new
distance introduced we also estimate by the minimum distance approach and use
the resulting estimation as a control on the accuracy of estimation obtained
by the former method. We have obtained a result which states that if we have
to estimate the parameter of a probability distribution from the one parameter
exponential family, then it suffices to have two points with exact ratio of
frequencies, that is equal to the theoretical one expressed by the ratio of
the value of the probability distribution on these two points, to obtain the
true value of the parameter. We have conjectured that if we have in general
$r$ parameters, then it suffices to have $r+1$ points with exact ratios of
their frequencies to obtain the $r$ true parameters exactly. The later result
need to be proved rigorously in a general setting for other distributions than
the class considered. A large comparative study between the classical and new
methods should also be investigated. We presented some perspectives of the new
approach such as model selection from truncated data using the new distance,
estimation of the first trial value in the celebrate \textit{EM} algorithm in
the case of truncation and for mixture of two normal populations, a test of
goodness of fit based on the new distance, decision making about the quality
of estimations and data.

\end{document}